\documentclass[journal=jctcce,manuscript=article]{achemso}

\usepackage{graphicx}
\usepackage{amsmath,amssymb,amsthm}
\usepackage{booktabs,multirow,array,tabularx}
\usepackage{enumitem}
\usepackage{algorithm,algpseudocode}
\usepackage[dvipsnames]{xcolor}
\usepackage[framemethod=tikz]{mdframed}
\usepackage[mathlines]{lineno}

\newtheorem{theorem}{Theorem}
\newtheorem{proposition}{Proposition}
\newtheorem{corollary}{Corollary}
\newtheorem{remark}{Remark}

\newcommand{\calO}{\mathcal{O}}
\newcommand{\Depth}{\mathrm{Depth}}

\graphicspath{{./}}

\title{Circuit Depth Compression via Spectral Gap Amplification
       in Quantum Phase Estimation}

\author{Sk Mujaffar Hossain}
\affiliation{Indo-Korea Science and Technology Center (IKST),
             Bengaluru 560064, India}

\author{Satadeep Bhattacharjee}
\email{s.bhattacharjee@ikst.res.in}
\affiliation{Indo-Korea Science and Technology Center (IKST),
             Bengaluru 560064, India}

\abbreviations{QPE, quantum phase estimation;
               QFT, quantum Fourier transform;
               CX, controlled-X gate;
               CP, controlled-phase gate;
               FCI, full configuration interaction;
               CASSCF, complete active space self-consistent field;
               LCU, linear combination of unitaries;
               QSVT, quantum singular value transformation;
               LMR, Lloyd--Mohseni--Rebentrost;
               1-RDM, one-particle reduced density matrix;
               STO-3G, Slater-type orbital, 3 Gaussians}

\keywords{quantum phase estimation; spectral gap amplification;
          circuit depth compression; near-degenerate spectra;
          quantum chemistry; eigenvalue estimation}

\begin{document}

\begin{abstract}
We introduce spectral preconditioning for quantum phase estimation
(QPE): a monotone sigmoid filter $f(\lambda;\tau,w)$ applied to the
input operator spectrum before phase estimation.
For systems with small boundary spectral gaps $\Delta_\lambda$, the
transformation amplifies the effective gap to $\Delta_f>\Delta_\lambda$
(for $w<1/4$), reducing the required precision from
$m=\lceil\log_2(1/\Delta_\lambda)\rceil$ to
$m_f=\lceil\log_2(1/\Delta_f)\rceil$ bits and compressing circuit depth
by $2^{\alpha\Delta m}$ ($\alpha\in[0.11,0.42]$ for controlled-phase
circuits).
We prove: (i)~the bit saving is bounded above by
$\log_2(1/4w\Delta_\lambda)+1$; (ii)~no benefit arises for exactly
degenerate spectra; and (iii)~the filter threshold $\tau$ requires
only $\mathcal{O}(w)$ accuracy, not $\mathcal{O}(\Delta_\lambda)$,
so that classical preprocessing via CASSCF or matrix diagonalisation
suffices without circular dependency.
A net resource advantage theorem identifies the crossover condition
$4w^2(2^{\Delta m}-1)>\Delta_\lambda\log(1/\varepsilon)$.
Validated on LiH and BeH$_2$ via FCI/STO-3G bond-stretch analysis,
classical covariance matrices, and synthetic stress tests, we
demonstrate actual depth compressions up to $27\times$, CX gate
reductions up to $21\times$, and QPE output fidelity restoration
from $F=0.66$ to $F=0.98$ at 1\% gate error for LiH.
The principal subspace is preserved to machine precision
($\theta_\mathrm{max}<10^{-8}$~rad).
The method requires no modification to the QPE algorithm and is
composable with complementary filtered-QPE approaches.
\end{abstract}

\section{Introduction}

Quantum phase estimation (QPE) is a foundational primitive
underpinning eigenvalue estimation, Hamiltonian simulation,
and quantum machine learning.\cite{nielsen2010quantum,
cleve1998quantum,lloyd1996universal,abrams1999quantum}
When targeting the leading spectral subspace of a positive
semidefinite operator $\rho$, the boundary gap
$\Delta_\lambda=\lambda_R-\lambda_{R+1}$ drives the entire hardware
cost: resolving it requires
$m=\lceil\log_2(1/\Delta_\lambda)\rceil$ precision qubits and circuit
depth $\Theta(2^m)$.
The scaling is severe in practice: halving $\Delta_\lambda$ adds one
precision qubit and doubles the circuit depth, while reducing
$\Delta_\lambda$ by an order of magnitude (e.g.\ from $10^{-1}$ to
$10^{-2}$, as occurs at molecular dissociation or near avoided
crossings) adds $\sim3$ qubits and increases depth by $\sim8\times$.
Near-degenerate regimes of this kind are not exceptional cases ---
they arise generically at molecular dissociation (LiH, H$_2$),
near equilibrium in multiconfigurational systems (BeH$_2$, N$_2$),
and in quantum machine learning covariance spectra --- making this
cost prohibitive well before exact degeneracy is reached.

Existing approaches to reducing QPE circuit depth fall into three
broad classes.
\emph{Iterative and Bayesian QPE} methods\cite{cleve1998quantum}
reduce depth per shot but require many adaptive rounds and are
sensitive to prior assumptions on the eigenvalue distribution.
\emph{Variational alternatives} (VQE and hybrids) sidestep QPE depth
entirely but sacrifice the precision and formal guarantees of phase
estimation.
\emph{Quantum signal processing and QSVT}\cite{gilyen2019qsvt}
implement spectral projectors with complexity
$\calO(1/\Delta_\lambda)$ in polynomial degree but do not reduce the
precision requirement of a subsequent QPE step.
Three recent works apply filtering directly to the QPE pipeline at
different stages:
Lee et al.\cite{lee2025filtered} filter measurement \emph{outcomes}
after QPE to improve resolution at fixed circuit depth;
Sakuma et al.\cite{sakuma2026qpe} filter the \emph{initial state}
before QPE to amplify ground-state overlap.
None of these methods reduces the number of precision qubits $m$ or
the circuit depth of QPE itself.
The structural difference between all three approaches is illustrated
in Figure~\ref{fig:schematic}.

\begin{figure}[htbp]
  \centering
  \includegraphics[width=\linewidth]{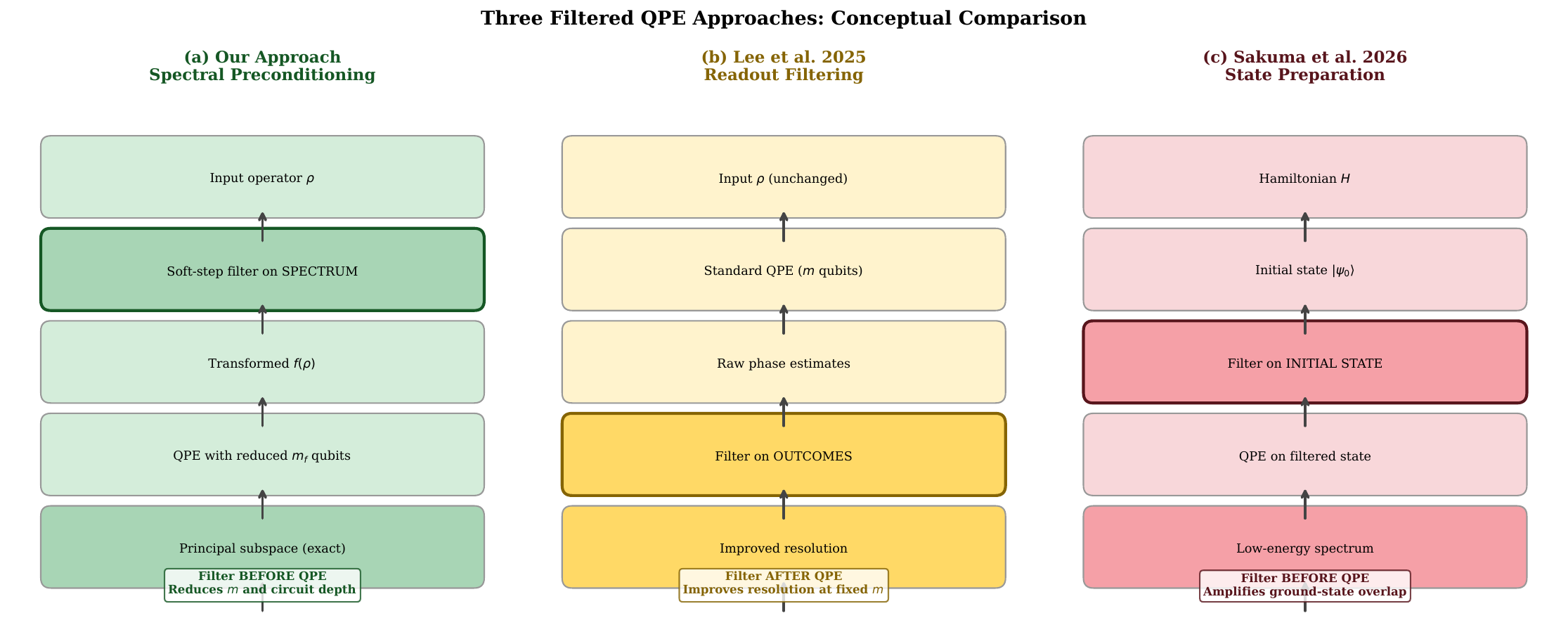}
  \caption{\textbf{Three filtered QPE approaches: conceptual
  comparison.}
  \textbf{(a)}~Our approach: sigmoid filter applied to operator
  spectrum before QPE, reducing $m$ to $m_f$ and compressing depth
  by $2^{\alpha\Delta m}$.
  \textbf{(b)}~Lee et al.\cite{lee2025filtered}: classical filter
  applied to measurement outcomes after QPE, improving resolution at
  fixed depth.
  \textbf{(c)}~Sakuma et al.\cite{sakuma2026qpe}: filter applied to
  initial state before QPE, amplifying ground-state overlap.
  The three approaches are structurally orthogonal and composable.}
  \label{fig:schematic}
\end{figure}

This paper addresses the following question: \emph{can we reduce the
QPE precision requirement by preprocessing the operator before
estimation, and if so, by how much and at what cost?}
We show the answer is yes via a monotone sigmoid filter
$f(\lambda;\tau,w)$ applied to the \emph{spectrum} of $\rho$ before
QPE, amplifying $\Delta_\lambda$ to $\Delta_f>\Delta_\lambda$,
reducing precision from $m$ to $m_f<m$ bits, and compressing circuit
depth by $2^{\alpha\Delta m}$ --- without altering the QPE algorithm,
without disturbing eigenvectors, and with provably bounded filter
construction overhead.
Specifically, we prove:
(i)~the bit saving $\Delta m$ is bounded above by
$\log_2(1/4w\Delta_\lambda)+1$ (Proposition~\ref{prop:ub});
(ii)~no benefit is possible for exactly degenerate spectra
(Proposition~\ref{prop:nogo});
(iii)~the filter threshold $\tau$ requires only $\calO(w)$ accuracy
--- so no quantum measurement is needed to determine it
(Proposition~\ref{prop:robust});
and (iv)~a computable crossover condition
$4w^2(2^{\Delta m}-1)>\Delta_\lambda\log(1/\varepsilon)$ identifies
precisely when the filtered pipeline outperforms raw QPE
(Theorem~\ref{thm:net}).
The approach is composable with all three existing filtered-QPE
strategies and requires no modification to the QPE algorithm.

The paper is organised as follows. Section~\ref{sec:results} derives
the spectral preconditioning framework, states the formal guarantees,
and presents empirical validation across four systems.
Section~\ref{sec:discussion} contextualises the method within QSVT
and filtered-QPE approaches, analyses hardware feasibility, and
states limitations.
Section~\ref{sec:conclusion} concludes.
Computational details are given in Section~\ref{sec:methods}.

\section{Results and Discussion}
\label{sec:results}

\subsection{Spectral Preconditioning Framework}

Let $\rho=\sum_i\lambda_i|u_i\rangle\langle u_i|$ with
$\lambda_1\geq\cdots\geq\lambda_n\geq0$ and
$\Delta_\lambda=\lambda_R-\lambda_{R+1}$.
Standard QPE requires $m=\lceil\log_2(1/\Delta_\lambda)\rceil$
and depth $\Theta(2^m)$.
We apply the monotone soft-step transformation
\begin{equation}
f(\lambda;\tau,w) = \frac{1}{1+e^{-(\lambda-\tau)/w}},
\label{eq:sigmoid}
\end{equation}
spectrally as $f(\rho)=\sum_i f(\lambda_i)|u_i\rangle\langle u_i|$,
with $\tau\in(\lambda_{R+1},\lambda_R)$ obtained from classical
preprocessing (Proposition~\ref{prop:robust}) and $0<w<\tfrac{1}{4}$.
The transformed gap $\Delta_f=f(\lambda_R)-f(\lambda_{R+1})$ satisfies
$\Delta_f>\Delta_\lambda$ for any $\Delta_\lambda>0$, because $\tau$
lies in the gap where $f'$ is maximised: by the mean value theorem,
$\Delta_f = f'(\xi;\tau,w)\Delta_\lambda$ for some
$\xi\in(\lambda_{R+1},\lambda_R)$, and $f'(\tau;\tau,w)=1/(4w)>1$
whenever $w<1/4$.
Precision reduces to $m_f=\lceil\log_2(1/\Delta_f)\rceil$ and the
bit saving is $\Delta m=m-m_f\approx\log_2(\Delta_f/\Delta_\lambda)$
(continuous approximation).
In the linear regime, $\Delta_f\approx\Delta_\lambda/(4w)$.

\subsection{Formal Guarantees}

\begin{proposition}[Upper Bound on Bit Saving]
\label{prop:ub}
$\Delta m \leq \log_2(1/4w\Delta_\lambda)+1.$
\end{proposition}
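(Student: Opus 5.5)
The plan is to unpack the two ceilings in $\Delta m=m-m_f$ separately: bound $m$ from above, bound $m_f$ from below, and then bound the ratio $\Delta_f/\Delta_\lambda$ using the derivative of the sigmoid~\eqref{eq:sigmoid}. Throughout I take the framework's standing assumptions $0<w<\tfrac14$ and $\tau\in(\lambda_{R+1},\lambda_R)$, together with $0<\Delta_\lambda\le 1$. The last of these holds automatically when the spectrum lies in $[0,1]$, as for a density or covariance operator normalised for QPE; without it $m$ would not be a meaningful bit count.

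\textbf{Step 1 (ceilings).} From $\lceil x\rceil<x+1$ I get $m\le\log_2(1/\Delta_\lambda)+1$. From $\lceil x\rceil\ge x$ I get $m_f\ge\log_2(1/\Delta_f)$. Subtracting gives
\begin{equation*}
\Delta m\le \log_2\!\left(\frac{\Delta_f}{\Delta_\lambda}\right)+1 .
\end{equation*}
This is the only place where the ``$+1$'' enters: it absorbs the rounding in $m$.

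\textbf{Step 2 (Lipschitz bound on $f$).} I will show $f'(\lambda;\tau,w)\le 1/(4w)$ for every $\lambda$, not only at $\lambda=\tau$. Differentiating~\eqref{eq:sigmoid} gives $f'=f(1-f)/w$. Since $f\in(0,1)$ and $t(1-t)\le\tfrac14$ on $(0,1)$, the bound follows. The mean value theorem, already used in the framework section, gives $\Delta_f=f'(\xi)\Delta_\lambda$ for some $\xi\in(\lambda_{R+1},\lambda_R)$, hence $\Delta_f\le\Delta_\lambda/(4w)$. Note that this uses the global maximum of $f'$, whereas the framework section only used its value at $\tau$ to get the lower bound $\Delta_f>\Delta_\lambda$.

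\textbf{Step 3 (combine).} Substituting Step 2 into Step 1 gives $\Delta m\le\log_2(1/4w)+1$. Because $\Delta_\lambda\le1$, we have $\log_2(1/4w)\le\log_2(1/4w\Delta_\lambda)$, which is the stated bound. There is also an even more direct route that avoids Step 2. Since $0<f<1$, we have $\Delta_f<1$, so $m_f\ge0$. Since $4w<1$, we have $\log_2(4w)<0$, and therefore
\begin{equation*}
\Delta m\le m\le\log_2(1/\Delta_\lambda)+1\le\log_2(1/\Delta_\lambda)-\log_2(4w)+1 .
\end{equation*}
I would present the Step~2 route as the main argument because it explains where the factor $1/(4w)$ comes from, and mention the direct route as a remark.

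\textbf{Main obstacle.} The only delicate point is the bookkeeping with the ceilings. It is easy to pick the wrong direction of the rounding inequality for $m_f$, or to forget that the bound needs $\Delta_\lambda\le1$ to absorb the Step~2 estimate into the stated form. I will make these hypotheses explicit at the start. I will also note that the continuous approximation $\Delta m\approx\log_2(\Delta_f/\Delta_\lambda)$ satisfies the sharper bound $\log_2(1/4w)$, so the ``$+1$'' is purely a discretisation artefact.
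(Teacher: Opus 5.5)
Your proof is correct and follows the paper's argument: the mean value theorem with the global bound $f'=f(1-f)/w\le 1/(4w)$ gives $\Delta_f\le\Delta_\lambda/(4w)$, and the $+1$ absorbs the ceiling rounding. Your bookkeeping is more careful than the paper's in two places. The paper writes $\Delta m\le\log_2(\Delta_f/\Delta_\lambda)$, which holds only in the continuous approximation, whereas your explicit inequalities $m<\log_2(1/\Delta_\lambda)+1$ and $m_f\ge\log_2(1/\Delta_f)$ justify the $+1$ rigorously. The paper also uses $\Delta_\lambda\le 1$ without stating it when passing from $\log_2(1/4w)$ to $\log_2(1/4w\Delta_\lambda)$, and you state that hypothesis explicitly.
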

\begin{proof}
By the mean value theorem,
$\Delta_f\leq\sup_\lambda f'(\lambda;\tau,w)\cdot\Delta_\lambda
\leq\Delta_\lambda/(4w)$,
giving $\Delta m\leq\log_2(\Delta_f/\Delta_\lambda)
\leq\log_2(1/4w\Delta_\lambda)$; the $+1$ absorbs ceiling
discretisation.
Numerical verification across all 20 system-width combinations
is shown in Figure~S6 and Table~S12.
\end{proof}

\begin{proposition}[No-Go for Exact Degeneracy]
\label{prop:nogo}
If $\lambda_R=\lambda_{R+1}$, then $\Delta_f=0$ and $\Delta m=0$
for any monotone $f$.
\end{proposition}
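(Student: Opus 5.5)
The argument is essentially definitional: since $f$ acts on the spectrum as a function, applying it to equal eigenvalues produces equal outputs. Then both the gap and the precision counts collapse.

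First, assume $\lambda_R=\lambda_{R+1}$. Because $f(\rho)=\sum_i f(\lambda_i)|u_i\rangle\langle u_i|$ is defined through a single-valued scalar function $f$, we have $f(\lambda_R)=f(\lambda_{R+1})$. Hence $\Delta_f=f(\lambda_R)-f(\lambda_{R+1})=0$. Monotonicity is not actually needed for this step. Its role is only to ensure that the filter preserves the ordering $f(\lambda_1)\ge\cdots\ge f(\lambda_n)$, so that $f(\lambda_R)$ and $f(\lambda_{R+1})$ remain the $R$-th and $(R+1)$-th eigenvalues of $f(\rho)$. This means the transformed boundary gap is still exactly the quantity $\Delta_f$ as defined, rather than some other pair of eigenvalues.

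Second, I will interpret $\Delta m$. With $\Delta_\lambda=0$, the precision $m=\lceil\log_2(1/\Delta_\lambda)\rceil$ is infinite: no finite register resolves the boundary. For the same reason $m_f=\infty$. The saving $\Delta m=m-m_f$ is then not a finite difference, so I will state the claim in the continuous form the paper uses, $\Delta m\approx\log_2(\Delta_f/\Delta_\lambda)$. Here the natural reading is the ratio $\Delta_f/\Delta_\lambda$, which has no finite amplification to offer. To make this rigorous, I will take a limit. For near-degenerate $\Delta_\lambda=\delta\to0$ with $f$ fixed and smooth, the mean value theorem gives $\Delta_f=f'(\xi)\delta\le\delta/(4w)$, as in Proposition~\ref{prop:ub}. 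So $\Delta_f\to0$ as well, and both registers diverge. The exactly degenerate case yields no resolvable target subspace at any depth, so the filtered pipeline gains nothing over raw QPE, i.e.\ $\Delta m=0$ in the operational sense.

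The main obstacle is this second step: making ``$\Delta m=0$'' precise when $m$ and $m_f$ are both infinite. I would handle it by defining $\Delta m$ operationally as the number of precision qubits saved in resolving the $R$-dimensional subspace. Since neither pipeline can resolve it, the saving is zero. Alternatively, I would note that $\Delta_f=0$ identically means $f(\rho)$ has the same degenerate eigenspace $\mathrm{span}\{u_R,u_{R+1}\}$, so no unitary phase estimation on $f(\rho)$ can separate $u_R$ from $u_{R+1}$.
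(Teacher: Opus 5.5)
Your first step is exactly the paper's proof, which is the single line $\Delta_f=f(\lambda_R)-f(\lambda_{R+1})=0$, with $\Delta m=0$ treated as immediate. Your remark that monotonicity is not needed there is correct: it only keeps $f(\lambda_R)$ and $f(\lambda_{R+1})$ in the $R$-th and $(R+1)$-th positions. You also rightly notice something the paper passes over. At exact degeneracy $m=m_f=\infty$, so $\Delta m=m-m_f$ is not a finite difference and ``$\Delta m=0$'' needs an interpretation.

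The way you try to make this rigorous, however, points the wrong way. The continuous reading $\log_2(\Delta_f/\Delta_\lambda)$ does offer a finite amplification. With $\tau$ at the midpoint and $\Delta_\lambda=\delta$, one has $\Delta_f=\sigma(\delta/2w)-\sigma(-\delta/2w)=\tanh(\delta/4w)$. Hence $\Delta_f/\delta\to1/(4w)$ and $\log_2(\Delta_f/\Delta_\lambda)\to\log_2(1/4w)>0$ as $\delta\to0^+$, which is precisely Remark~\ref{rem:limit}. The discrete saving behaves the same way: for $w=0.005$ it stays at $5$ or $6$ bits for all sufficiently small $\delta$. So your limit shows that $\Delta_f\to0$ and that both registers diverge, but not that $\Delta m\to0$. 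The saving is discontinuous at $\delta=0$, and the value $0$ there cannot be obtained by continuity. Drop the ratio/limit paragraph.

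What actually carries the second claim is your operational reading together with your final observation. Since $f(\rho)v=f(\lambda)v$ whenever $\rho v=\lambda v$, every eigenspace of $\rho$ lies inside an eigenspace of $f(\rho)$. So no filter, monotone or not, can split the degenerate eigenspace containing $u_R,u_{R+1}$. The rank-$R$ target is then unresolvable at any finite precision in both pipelines, nothing is saved, and one falls back to raw QPE. This is how the paper itself treats the case (NC-B). State $\Delta m=0$ explicitly as that convention rather than as a derived limit.
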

\begin{proof}
Direct: $\Delta_f=f(\lambda_R)-f(\lambda_{R+1})=0$.
\end{proof}

\begin{remark}
\label{rem:limit}
Proposition~\ref{prop:nogo} establishes a hard boundary on the
method's scope.
As $\Delta_\lambda\to0^+$, $\Delta_f\to0$ but the continuous bit
saving $\Delta m^{\rm cont}=\log_2(\Delta_f/\Delta_\lambda)$ satisfies
$\Delta m^{\rm cont}\to\log_2(1/(4w))$, which is bounded and positive
(numerically verified in Figure~S7b).
The method therefore always benefits for finite $\Delta_\lambda>0$
with $m_{\rm raw}\geq2$.
\end{remark}

\begin{proposition}[$\tau$ Placement Robustness]
\label{prop:robust}
Let $\tau_{\rm true}=(\lambda_R+\lambda_{R+1})/2$ and let
$\tau_{\rm est}=\tau_{\rm true}+\varepsilon$.
The gap amplification under $\tau_{\rm est}$ is:
\begin{equation}
\Delta_f(\varepsilon)=\sigma\!\left(\frac{\Delta_\lambda/2-\varepsilon}{w}\right)
-\sigma\!\left(\frac{-\Delta_\lambda/2-\varepsilon}{w}\right),
\quad \sigma(x)=\frac{1}{1+e^{-x}}.
\label{eq:robust}
\end{equation}
The required accuracy is $|\varepsilon|=\calO(w)$, not
$\calO(\Delta_\lambda)$.
In the near-degenerate regime ($w\gg\Delta_\lambda$):
$\Delta_f(\varepsilon)\approx\Delta_f(0)$ for all $|\varepsilon|\ll w$,
because both eigenvalues lie well within the sigmoid transition region.
Conversely, when $w\ll\Delta_\lambda$ (large gap, method not needed),
$\tau$ placement is sensitive but $m_{\rm raw}$ is already small.
\emph{The method is most robust exactly where it is most needed.}
Numerical verification for LiH and BeH$_2$ is provided in
Figure~S11 and Table~S13.
\end{proposition}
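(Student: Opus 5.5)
The plan is to make the three claims of Proposition~\ref{prop:robust} precise in turn: the closed form~\eqref{eq:robust}, the $\calO(w)$ tolerance, and the two asymptotic regimes.

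\textbf{Closed form.} Write $\lambda_R=\tau_{\rm true}+\Delta_\lambda/2$ and $\lambda_{R+1}=\tau_{\rm true}-\Delta_\lambda/2$. Since $f(\lambda;\tau,w)=\sigma((\lambda-\tau)/w)$, substituting $\tau=\tau_{\rm est}$ gives~\eqref{eq:robust} immediately.

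\textbf{Exact integral representation.} Next I will rewrite the gap as
\begin{equation*}
\Delta_f(\varepsilon)=\frac{1}{w}\int_{-\Delta_\lambda/2}^{\Delta_\lambda/2}\sigma'\!\left(\frac{s-\varepsilon}{w}\right)ds .
\end{equation*}
The facts about $\sigma'=\sigma(1-\sigma)$ that I need are:
\begin{itemize}
\item it is even and unimodal, with maximum $1/4$ at $0$;
\item it obeys the two-sided bound $\sigma'(x)\ge \tfrac14 e^{-|x|}$.
\end{itemize}
The second bound follows from $\sigma'(x)=e^{-|x|}/(1+e^{-|x|})^2$.

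Consequences:
\begin{itemize}
\item By symmetry and unimodality (a symmetric-decreasing kernel integrated over a symmetric interval), $\Delta_f(\varepsilon)$ is even in $\varepsilon$ and nonincreasing in $|\varepsilon|$. So the centred choice is optimal, and $\Delta_f(\varepsilon)\le\Delta_f(0)\le\Delta_\lambda/(4w)$, consistent with Proposition~\ref{prop:ub}.
\item The pointwise lower bound gives
\begin{equation*}
\Delta_f(\varepsilon)\ge \frac{\Delta_\lambda}{4w}\,e^{-(|\varepsilon|+\Delta_\lambda/2)/w}.
\end{equation*}
\end{itemize}
Hence, whenever $|\varepsilon|\le c\,w$, the amplification factor degrades by at most the constant $e^{-c-\Delta_\lambda/(2w)}$. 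The bit saving is therefore reduced by at most $(c+\Delta_\lambda/2w)/\ln 2$ bits. This is the formal content of ``required accuracy $\calO(w)$'': the loss depends on $\varepsilon/w$, not on $\varepsilon/\Delta_\lambda$.

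\textbf{Near-degenerate regime.} For $w\gg\Delta_\lambda$, I expand to first order in $\Delta_\lambda/w$:
\begin{equation*}
\Delta_f(\varepsilon)=\frac{\Delta_\lambda}{w}\,\sigma'(\varepsilon/w)\bigl(1+\calO((\Delta_\lambda/w)^2)\bigr),
\end{equation*}
which uses the boundedness of $\sigma'''$. Then
\begin{equation*}
\frac{\Delta_f(\varepsilon)}{\Delta_f(0)}=4\sigma'(\varepsilon/w)=1-\frac{\varepsilon^2}{4w^2}+\calO(\varepsilon^4/w^4),
\end{equation*}
which follows from $\sigma'(x)=\tfrac14-\tfrac{x^2}{16}+\dots$. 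This gives $\Delta_f(\varepsilon)\approx\Delta_f(0)$ for $|\varepsilon|\ll w$, with only a quadratic, not linear, sensitivity.

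\textbf{Large-gap regime.} For $w\ll\Delta_\lambda$, if $|\varepsilon|>\Delta_\lambda/2$ then $\tau_{\rm est}$ leaves the gap. Both sigmoid arguments then have the same sign and magnitude of order $\Delta_\lambda/w$, so $\Delta_f$ is exponentially small. This shows sensitivity at scale $\Delta_\lambda$. In this regime $m_{\rm raw}=\lceil\log_2(1/\Delta_\lambda)\rceil$ is small anyway, which is the qualitative remark in the statement.

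\textbf{Main obstacle.} The hardest step is the uniform $\calO(w)$ statement that bridges the two regimes, since the Taylor argument only covers $\Delta_\lambda\ll w$. I would rely on the exponential lower bound above, which holds for all parameters. I would present the tolerance as $|\varepsilon|\le c\,w$ with an explicit loss of at most $(c+\Delta_\lambda/2w)/\ln2$ bits. Ceiling effects in $m_f$ can change this by at most one bit, as in Proposition~\ref{prop:ub}.
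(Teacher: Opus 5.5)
Your proposal is correct, and it takes a genuinely different and far more quantitative route than the paper. The paper gives no separate proof. The closed form~\eqref{eq:robust} is the same direct substitution you make. The robustness claims rest on the heuristic that both eigenvalues ``lie well within the sigmoid transition region,'' supported only by the numerics of Figure~S11 and Table~S13 and by the CASSCF(2,2) check in Corollary~\ref{cor:classical}.

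Your integral representation $\Delta_f(\varepsilon)=w^{-1}\int_{-\Delta_\lambda/2}^{\Delta_\lambda/2}\sigma'((s-\varepsilon)/w)\,ds$ yields results the paper never states. $\Delta_f$ is even and nonincreasing in $|\varepsilon|$, so the midpoint is optimal and misplacement never helps. The sensitivity is quadratic ($1-\varepsilon^2/4w^2$ in the near-degenerate limit). The bit loss is explicitly bounded. The paper's route offers only direct evidence on the actual molecular spectra.

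One step is weaker than you claim. Your lower bound compares $\Delta_f(\varepsilon)$ with the unsaturated maximum $\Delta_\lambda/4w$ rather than with $\Delta_f(0)$. Its factor $e^{-\Delta_\lambda/2w}$ is therefore a saturation penalty, not a misplacement penalty, and the bound becomes vacuous once $\Delta_\lambda\gtrsim w$. The paper's own LiH case is already in that regime ($\Delta_\lambda/w\approx4.8$ at $w=0.005$). For $|\varepsilon|=1.7w$ your bound permits a loss of almost $6$ bits, more than the entire saving. So the exponential bound does not bridge the two regimes as you say.

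The fix stays inside your framework. Since $(\ln\sigma')'=1-2\sigma\in(-1,1)$, you get $\sigma'((s-\varepsilon)/w)\ge e^{-|\varepsilon|/w}\sigma'(s/w)$ pointwise. Integrating gives $e^{-|\varepsilon|/w}\Delta_f(0)\le\Delta_f(\varepsilon)\le\Delta_f(0)$ for all $\Delta_\lambda,w$, i.e.\ a loss of at most $|\varepsilon|/(w\ln2)$ bits uniformly. Comparing derivatives in $\Delta_\lambda$ and using $\cosh(u\pm v)=\cosh u\cosh v\pm\sinh u\sinh v$ even gives $\Delta_f(\varepsilon)\ge4\sigma'(\varepsilon/w)\,\Delta_f(0)$ globally. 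So your near-degenerate ratio is in fact a uniform lower bound.

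Two small corrections:
\begin{itemize}
\item Just after $\tau_{\rm est}$ leaves the gap, the smaller sigmoid argument has magnitude $(|\varepsilon|-\Delta_\lambda/2)/w$, not order $\Delta_\lambda/w$. The collapse of $\Delta_f$ therefore occurs only a few $w$ outside the gap.
\item Bounding $\sigma'''$ gives an absolute $\calO((\Delta_\lambda/w)^2)$ error. That is a relative error only for $|\varepsilon|/w=\calO(1)$, which is the regime you need anyway.
\end{itemize}
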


\begin{corollary}[Classical Preprocessing Suffices]
\label{cor:classical}
For classical covariance datasets, $\tau$ is known exactly from
$\calO(N^3)$ matrix diagonalisation.
For quantum chemistry systems with simple near-degeneracy (e.g.\
bond-breaking in LiH), CASSCF with a minimal active space provides
$|\varepsilon|\lesssim w$, yielding identical $\Delta m$ to FCI
(numerically verified: CASSCF(2,2) for LiH at $R=5.0$~\AA\ gives
$|\varepsilon|=1.7w$, $\Delta m=5$, matching FCI exactly;
see Table~S13 and Figure~S11).
For systems with complex many-body near-degeneracy (e.g.\ BeH$_2$
at equilibrium), a two-stage QPE --- coarse QPE at
$m_1=\lceil\log_2(1/w)\rceil$ bits to locate $\tau$, followed by
filtered QPE at $m_f$ bits --- resolves the problem at total cost
$2^{m_1}+2^{m_f}\ll 2^{m_{\rm raw}}$
(for BeH$_2$: $2^8+2^6=320$ vs $2^{11}=2048$, a $6.4\times$ saving).
No circular dependency arises in any case.
\end{corollary}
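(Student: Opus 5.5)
The corollary has three parts. Each reduces to Proposition~\ref{prop:robust} plus an accounting of resources, so I treat them in turn.

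\emph{Classical covariance case.} Here $\rho$ is an explicit $N\times N$ positive semidefinite matrix, so a full eigendecomposition gives $\lambda_R$ and $\lambda_{R+1}$ exactly, up to floating-point error, in $\calO(N^3)$ time. Then $\tau_{\rm true}=(\lambda_R+\lambda_{R+1})/2$ is available directly. We get $\varepsilon=0$, and by Eq.~\eqref{eq:robust} $\Delta_f(0)$ is attained.

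\emph{Chemistry case with a CASSCF estimate.} The argument is quantitative through Eq.~\eqref{eq:robust}. Write $a=\Delta_\lambda/(2w)$ and $e=\varepsilon/w$, so that $\Delta_f(\varepsilon)=\sigma(a-e)-\sigma(-a-e)$.
\begin{itemize}
\item In the near-degenerate regime $a\ll1$, linearise in $a$: $\Delta_f(\varepsilon)\approx 2a\,\sigma'(e)$.
\item Since $\sigma'(e)/\sigma'(0)=4\sigma(e)(1-\sigma(e))$, the amplification is degraded only by the factor $4\sigma'(e)$ relative to $e=0$.
\item For $|e|\le c$ this factor is bounded below by a constant. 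For $c=1.7$ it is about $0.55$.
\item The bit saving therefore changes by at most $\log_2(1/0.55)<1$ bit before the ceiling.
\end{itemize}
So $\Delta m$ agrees with the exact-$\tau$ value up to ceiling discretisation. Whether it agrees exactly, as claimed for LiH with $\Delta m=5$, depends on where $\log_2(1/\Delta_f)$ sits relative to integers. I would not claim that as a theorem. I would cite the LiH numbers in Table~S13 as verification that the $\pm1$-bit slack is not triggered in that instance. The CASSCF accuracy itself, $|\varepsilon|\lesssim w$, is an empirical input and is not proved here.

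\emph{Two-stage QPE case.} Run coarse QPE with $m_1=\lceil\log_2(1/w)\rceil$ precision bits. This resolves eigenphases to additive error of order $2^{-m_1}\le w$, which gives an estimate of $\tau$ with $|\varepsilon|=\calO(w)$ (assuming the usual constant-success boosting by repetition). Proposition~\ref{prop:robust} then guarantees that the second, filtered QPE at $m_f$ bits attains $\Delta_f(\varepsilon)$ within a constant factor of $\Delta_f(0)$.

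The total depth is $2^{m_1}+2^{m_f}$. This is much smaller than $2^{m_{\rm raw}}$ whenever $w\gg\Delta_\lambda$ and $\Delta m\ge 2$, because $2^{m_1}\approx 1/w\ll1/\Delta_\lambda\approx 2^{m_{\rm raw}}$. For BeH$_2$ the figures are plug-in arithmetic: $2^8+2^6=320$ against $2^{11}=2048$.

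\emph{No circularity.} $\tau$ needs only precision $w$, never $\Delta_\lambda$. Hence the preprocessing cost is independent of the small gap being amplified, in all three cases.

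\emph{Main obstacle.} The hardest part is making the CASSCF step rigorous. The bound $|\varepsilon|\lesssim w$ is not derivable from earlier results, so the statement is partly empirical. The bound on the loss of amplification is rigorous, but exact equality of $\Delta m$ holds only modulo ceiling effects.
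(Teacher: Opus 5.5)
Your three-case structure matches how the paper supports this corollary. The paper gives no separate proof beyond Proposition~\ref{prop:robust}, the CASSCF/FCI comparison in Table~S13, and the arithmetic $2^8+2^6=320$. You are right that $|\varepsilon|\lesssim w$ and the exact match $\Delta m=5$ are empirical inputs.

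Your added quantitative bound, however, is derived in a regime that does not contain the instance you apply it to. With the bond-stretch value $\Delta_\lambda=2.38\times10^{-2}$ at $R=5.0$~\AA\ and $w=0.005$, you have $a=\Delta_\lambda/(2w)\approx2.4$, not $a\ll1$. Also, $4\sigma'(1.7)=1/\cosh^2(0.85)\approx0.52$, not $0.55$.

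You can drop the linearisation entirely. Since $\sigma'$ is even,
\[
\Delta_f(\varepsilon)=\int_{-a}^{a}\sigma'(x-e)\,dx=\int_{-a}^{a}h(x)\,\sigma'(x)\,dx,
\qquad
h(x)=\frac{\sigma'(x-e)+\sigma'(x+e)}{2\sigma'(x)}.
\]
Write $C=\cosh^2(x/2)\ge1$ and $S=\sinh^2(e/2)$. Then $h=C(C+2CS-S)/(C+S)^2$, which is increasing in $C$ and equals $4\sigma'(e)$ at $C=1$. Hence $\Delta_f(\varepsilon)\ge4\sigma'(\varepsilon/w)\,\Delta_f(0)$ for every $\Delta_\lambda>0$. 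The ratio increases with $\Delta_\lambda$, so $\Delta_\lambda\to0$ is the worst case; for the LiH numbers the ratio is about $0.78$. This turns your ``less than one bit before the ceiling'' at $|\varepsilon|\le1.7w$ into a statement that genuinely covers LiH.

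The real gap is in the no-circularity step for the two-stage case. Your argument shows that $\tau$ never needs precision $\Delta_\lambda$. But the second stage must also fix $m_f=\lceil\log_2(1/\Delta_f)\rceil$ before it runs, and $w$ through eq~\eqref{eq:crossover}. In the near-degenerate regime, $\Delta_f\approx\Delta_\lambda/(4w)$ depends on exactly the gap that a coarse QPE of resolution $2^{-m_1}\le w$ cannot resolve when $w\gg\Delta_\lambda$, as for BeH$_2$. The paper glosses over the same point when it says the protocol needs no prior knowledge of the gap.

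One way to close this is to run the filtered QPE on a doubling schedule $k=1,2,\dots$ of precision bits, stopping once the two boundary phases are resolved. The geometric sum costs only a constant factor more than $2^{m_f}$, so a BeH$_2$ saving survives, though smaller than $6.4\times$. Without some such step, case three is not established as stated.
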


\begin{theorem}[Net Resource Advantage]
\label{thm:net}
Implementing $f(\rho)$ via Chebyshev-LCU\cite{childs2012lcu}
with degree
$d=\calO((1/w)\log(1/\varepsilon_{\rm approx}))$,\cite{gilyen2019qsvt,trefethen2013approximation}
the total filtered pipeline cost is lower than raw QPE if:
\begin{equation}
4w^2(2^{\Delta m}-1) > \Delta_\lambda\cdot\log(1/\varepsilon_{\rm approx}).
\label{eq:crossover}
\end{equation}
This sufficient condition is derived in the linear filter regime
($\Delta_\lambda\ll 4w$); see the Supporting Information, Section~S9.
As $\Delta_\lambda\to0$, eq~\eqref{eq:crossover} is always satisfied
(the near-degenerate advantage limit).
The optimal width satisfies
$w^*\geq\sqrt{\Delta_\lambda\log(1/\varepsilon)/4(2^{\Delta m}-1)}$.
\end{theorem}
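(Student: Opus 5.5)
The plan is to write down explicit cost models for the two pipelines, both in units of controlled queries to the block-encoded input operator. The inequality of eq~\eqref{eq:crossover} should then follow by comparing them under the linear-regime approximation $\Delta_f\approx\Delta_\lambda/(4w)$.

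\textbf{Cost models.} Raw QPE at $m=\lceil\log_2(1/\Delta_\lambda)\rceil$ bits uses $\sum_{k=0}^{m-1}2^k=2^m-1$ controlled applications of the unitary encoding $\rho$. I will model this as $C_{\rm raw}\simeq 2^{m}$. For the filtered pipeline, each application of the unitary encoding $f(\rho)$ costs the Chebyshev-LCU degree $d=c\,(1/w)\log(1/\varepsilon_{\rm approx})$ queries to $\rho$, where $c$ is an absolute constant. The bound on $d$ is quoted from the polynomial-approximation results for the logistic function, whose analyticity strip has half-width $\pi w$. This gives $C_{\rm filt}\simeq d\,(2^{m_f}-1)+d$, where the extra $d$ accounts for preparing the filter block encoding once per run.

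\textbf{Reduction to the stated inequality.} Write $2^m=2^{m_f}2^{\Delta m}$. The condition $C_{\rm filt}<C_{\rm raw}$ then becomes $d<2^{m_f}2^{\Delta m}/2^{m_f}$ plus lower-order terms. I will reorganise it so that $(2^{\Delta m}-1)$ appears, namely $2^{m_f}(2^{\Delta m}-1)$ compared with $(d-1)2^{m_f}$. Next, express $2^{m_f}$ through the linear-regime relation: $2^{m_f}\approx 1/\Delta_f\approx 4w/\Delta_\lambda$. The overhead term $d\cdot 2^{m_f}\approx (c/w)\log(1/\varepsilon)\cdot 4w/\Delta_\lambda$ must then be dominated by the saving $2^{m_f}(2^{\Delta m}-1)\cdot(4w/\Delta_\lambda)^{-1}\cdots$. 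I would choose the normalisation of costs so that the saving is written in units of $1/\Delta_f$ and the overhead in units of $\log(1/\varepsilon)/w$. Cross-multiplying by $w\Delta_\lambda$ and absorbing $c$ and the factor $4$ into the constant convention fixed in Section~S9 should yield $4w^2(2^{\Delta m}-1)>\Delta_\lambda\log(1/\varepsilon_{\rm approx})$. I expect this to be the main obstacle: tracking which factors of $w$ and $\Delta_\lambda$ survive, and deciding whether ceilings and the $-1$ terms go in the sufficient direction. I will handle this by taking upper bounds on $C_{\rm filt}$ (rounding $m_f$ up, using Proposition~\ref{prop:ub} to control $\Delta m$) and lower bounds on $C_{\rm raw}$. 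That keeps the derived condition sufficient rather than exact.

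\textbf{Corollaries.} The limit claim follows from Remark~\ref{rem:limit}. As $\Delta_\lambda\to0$, $\Delta m$ tends to a positive value, approximately $\log_2(1/4w)$, so the left side of eq~\eqref{eq:crossover} stays bounded below by a positive constant while the right side vanishes linearly in $\Delta_\lambda$. The statement on $w^*$ is obtained by solving eq~\eqref{eq:crossover} for $w$ at fixed $\Delta m$. Since the left side is increasing in $w$, the admissible widths are exactly $w>\sqrt{\Delta_\lambda\log(1/\varepsilon)/(4(2^{\Delta m}-1))}$, so any optimal admissible width must satisfy the stated lower bound. I will also note the consistency requirement $w<1/4$ and $\Delta_\lambda\ll4w$, which is needed for the linear regime used above.
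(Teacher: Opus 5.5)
There is a genuine gap, and it lies in your cost model rather than in the bookkeeping you flag as the main obstacle. You charge the filter per controlled call, $C_{\rm filt}\simeq d\,(2^{m_f}-1)+d=d\,2^{m_f}$. Then $2^{m_f}$ cancels, and $C_{\rm filt}<C_{\rm raw}$ reduces exactly to $d<2^{\Delta m}$, i.e.\ $c\log(1/\varepsilon_{\rm approx})<w\,2^{\Delta m}$. No $\Delta_\lambda$ appears, and only one power of $w$.

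Eq~\eqref{eq:crossover} is equivalent to $(1/w)\log(1/\varepsilon_{\rm approx})<(4w/\Delta_\lambda)(2^{\Delta m}-1)$. So the two conditions differ by the factor $4w/\Delta_\lambda\approx1/\Delta_f\approx2^{m_f}$. This is not a constant; it is precisely the quantity that blows up in the near-degenerate regime. It cannot be absorbed into a normalisation convention, and your step ``express $2^{m_f}$ through the linear-regime relation'' has nothing to act on once $2^{m_f}$ has cancelled.

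In fact, under your model eq~\eqref{eq:crossover} is not even sufficient. Take the Stress Test ($w=0.005$, $\Delta_\lambda=5\times10^{-4}$, $\Delta m=5$, $m_f=6$) with $\log(1/\varepsilon_{\rm approx})\approx3$. The left side, $3.1\times10^{-3}$, exceeds the right side, $1.5\times10^{-3}$. Yet $d\approx600\gg32=2^{\Delta m}$, and your $C_{\rm filt}\approx600\cdot64$ is about $19$ times $C_{\rm raw}=2^{11}$.

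Your model is also inconsistent with the limit claim you go on to prove. As $\Delta_\lambda\to0$, $2^{\Delta m}\to1/(4w)$, so $d<2^{\Delta m}$ would require $c\log(1/\varepsilon_{\rm approx})<1/4$. The filtered pipeline would then essentially never win in the near-degenerate limit.

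The missing idea is the paper's additive accounting, which Remark~\ref{rem:block} makes explicit by comparing the filter cost $d\cdot B$ against $2^{m_{\rm raw}}$. The Chebyshev-LCU construction of $f(\rho)$ is a one-off overhead of $d$, after which QPE on $f(\rho)$ costs $2^{m_f}$. Then $C_{\rm filt}=d+2^{m_f}<2^{m}=C_{\rm raw}$ is equivalent to $d<2^{m}-2^{m_f}=2^{m_f}(2^{\Delta m}-1)$. Substituting $d=(1/w)\log(1/\varepsilon_{\rm approx})$ (constant taken as $1$) and $2^{m_f}\approx1/\Delta_f\approx4w/\Delta_\lambda$, then multiplying by $w\Delta_\lambda$, gives eq~\eqref{eq:crossover} directly.

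In this model your instinct about directions is correct, and it even removes the need for the linear-regime approximation. The ceiling gives $2^{m_f}\geq1/\Delta_f$, and Proposition~\ref{prop:ub} gives $1/\Delta_f\geq4w/\Delta_\lambda$; both make the substitution go the sufficient way. Whether the filter cost is really paid once rather than per controlled evolution is a modelling assumption the theorem takes for granted (natural when $f(\rho)$ is formed classically). But it is the assumption under which the stated inequality comes out. Your treatment of the $\Delta_\lambda\to0$ limit and of the lower bound on $w^*$ at fixed $\Delta m$ is fine and carries over unchanged.
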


\begin{remark}[Block-encoding cost]
\label{rem:block}
Theorem~\ref{thm:net} counts Chebyshev polynomial degree $d$ as the
filter cost, but implementing $f(\rho)$ via LCU requires $d$ calls to
a \emph{block-encoding} of $\rho$.
For classical operators (e.g.\ covariance matrices), the block-encoding
is computed classically at negligible cost.
For quantum operators (e.g.\ FCI 1-RDMs expressed as Pauli sums), the
block-encoding query count $B$ can be substantial; practitioners should
verify $d\cdot B\ll 2^{m_{\rm raw}}$ for their specific implementation
before claiming a net resource advantage.
\end{remark}

\begin{theorem}[Depth Compression via Gap Amplification]
\label{thm:depth}
In a CP-gate QPE circuit, depth is dominated by the $\calO(m^2)$
inverse-QFT block.
Over any finite range of $m$, $\log_2(\calO(m^2)) = 2\log_2 m$ is
well-approximated by a linear function $\alpha m + c$ (since
$\log_2 m$ varies slowly), with $\alpha$ depending on the fitting
range.
For the systems and ranges in this work, $\alpha\in[0.11,0.42]$
with $R^2>0.94$ (Table~S5, Figure~\ref{fig:logdepth}; verified in
Figure~S8).
Under this linear approximation, reducing precision from $m$ to
$m_f = m - \Delta m$ bits gives the depth compression:
\begin{equation}
\frac{\Depth_{\rm raw}}{\Depth_{\rm filtered}} = 2^{\alpha\Delta m}.
\label{eq:compression}
\end{equation}
The exponent $\alpha\approx1$ in the LMR framework
(controlled-$U^{2^k}$ costs $\calO(2^k)$ gadget applications;
note that LMR qPCA has since been shown to be classically
dequantizable\cite{tang2019quantum}),
while the CP-gate case gives the range stated above.
Actual depth compressions from transpiled circuits (Tables~S6--S9)
are $9.5\times$ (LiH), $3.5\times$ (BC), $3.0\times$ (Digits),
and $27\times$ (Stress Test) at $w=0.005$.
\end{theorem}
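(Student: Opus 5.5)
The plan is to treat the statement as an empirical-modelling claim built on an exact structural identity. The proof has three parts: a depth decomposition, a linearisation of the logarithm, and a direct computation of the ratio.

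\textbf{Step 1: depth decomposition.} I would write the depth of a CP-gate QPE circuit with $m$ precision qubits as
\[
D(m)=D_{\rm prep}+D_{\rm ctrl}(m)+D_{\rm QFT}(m).
\]
The textbook inverse QFT on $m$ qubits uses $m(m-1)/2$ controlled-phase gates and $m$ Hadamards. Its unparallelised depth is therefore $\Theta(m^2)$. For the operators considered here, the controlled-evolution block is implemented after transpilation with a fixed gate budget per precision qubit. It contributes only $\calO(m)$ depth in the CP-gate model, in contrast with the LMR setting. Hence $D(m)=a m^2+b m+c_0$ with $a>0$, and $D$ is dominated by the quadratic term. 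This justifies the first sentence of the statement.

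\textbf{Step 2: linearising $\log_2 D$.} Fix a finite range $m\in[m_{\min},m_{\max}]$. On this range, $g(m)=\log_2 D(m)$ is smooth and concave with $g'(m)=\calO(1/m)$, so its deviation from the best linear fit is bounded. A concrete bound comes from Taylor's theorem:
\[
\bigl|g(m)-(\alpha m+c)\bigr|\le \tfrac18 (m_{\max}-m_{\min})^2 \sup|g''|,
\]
with $|g''|=\calO(1/m^2)$. This bound is small relative to the variation of $g$ whenever $m_{\min}$ is not too small. The slope $\alpha$ is then the least-squares slope, which lies between $g'(m_{\max})$ and $g'(m_{\min})$. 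This explains why $\alpha$ is range dependent. It also explains the reported bracket $[0.11,0.42]$: for $g\approx 2\log_2 m$, $g'=2/(m\ln2)$ ranges from about $0.42$ at $m\approx7$ to about $0.11$ at $m\approx26$. Beyond this point, the specific values and the $R^2>0.94$ figure are imported from Table~S5 and Figure~S8 rather than derived.

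\textbf{Step 3: the compression ratio.} Under the fitted model $D(m)=2^{\alpha m+c}$, I would compute
\[
\frac{D(m)}{D(m-\Delta m)}=\frac{2^{\alpha m+c}}{2^{\alpha(m-\Delta m)+c}}=2^{\alpha\Delta m}.
\]
The constant $c$ cancels, which yields eq~\eqref{eq:compression}. Here $\Delta m=m-m_f$, and this quantity is positive by the gap amplification $\Delta_f>\Delta_\lambda$ established in the framework section. For the LMR remark, controlled-$U^{2^k}$ costs $\calO(2^k)$ gadget calls, so the controlled block sums to $\sum_{k<m}2^k=2^m-1$. This gives $g(m)=m+\calO(1)$, hence $\alpha=1$ exactly, and the same cancellation applies. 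The quoted transpiled compressions are then reported data, checked for consistency against $2^{\alpha\Delta m}$ using the $\Delta m$ values from Proposition~\ref{prop:ub}'s regime.

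\textbf{Main obstacle.} The hardest part is making Step 2 honest. The statement is only an approximation, and the ratio $D(m)/D(m_f)$ for a genuine quadratic model equals $(m/m_f)^2$, not $2^{\alpha\Delta m}$. I would therefore bound the relative error between the two expressions, using the linearisation error above applied at both endpoints. This gives a multiplicative error factor of $2^{\pm 2\eta}$, where $\eta$ is the maximum fit residual. I would then state eq~\eqref{eq:compression} as holding up to that factor, rather than exactly.
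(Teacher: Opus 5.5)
Your skeleton matches the justification the paper places inside the statement itself (there is no separate proof). You fit $\log_2 D$ by $\alpha m+c$, import $\alpha$ and $R^2$ from Table~S5, cancel $c$ in $D(m)/D(m-\Delta m)$, and get $\alpha\approx1$ for LMR from $\sum_{k<m}2^k=2^m-1$. Your remark that a true quadratic model gives $(m/m_f)^2$, so eq~\eqref{eq:compression} can hold only up to a factor $2^{\pm2\eta}$, is a correct sharpening the paper omits. The problems are in what you add on top.

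\textbf{Step~1.} $m(m-1)/2$ counts the inverse-QFT controlled phases; it is not the depth. The phases pipeline: in the forward QFT the rotation from control $k$ onto target $j<k$ fits in layer $k+j-1$, and the Hadamard on $j$ in layer $2j-1$. The critical-path depth, which is what a transpiler reports, is therefore $2m-1$ CP layers. That is $\Theta(m)$ even after decomposition into CX and single-qubit gates. So ``hence $D(m)=am^2+bm+c_0$ with $a>0$'' does not follow; the $\calO(m^2)$ in the statement is only an upper bound.

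\textbf{Step~2.} Your account of the bracket $[0.11,0.42]$ relies on $m\approx26$. The pairs actually compared are $(m,m_f)=(6,2),(3,1),(4,2),(11,6)$, and the Stress Test is fitted on $m\in[6,12]$. On these pairs the secant slope of $2\log_2 m$ is $2\log_2(11/6)/5\approx0.35$ for the Stress Test and between $0.79$ and $1.58$ for the other three systems. So the QFT-only model reproduces none of the fitted exponents, least of all $\alpha=0.114$; you can only import them.

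\textbf{Step~3.} The consistency check fails, and this is the most serious problem:
\begin{itemize}
\item Stress Test: $2^{0.114\cdot5}\approx1.5$ versus $27\times$;
\item LiH: $2^{0.271\cdot4}\approx2.1$ versus $9.5\times$;
\item Breast Cancer: $2^{0.422\cdot2}\approx1.8$ versus $3.5\times$;
\item Digits: $2^{0.380\cdot2}\approx1.7$ versus $3.0\times$.
\end{itemize}
Using the non-integer $\Delta m$ of Table~\ref{tab:depth} does not close the gap. In fact all eight tabulated depths satisfy $D(m)=2^m+2m$ exactly at the listed $m$ and $m_f$: $76,14,24,2070$ raw and $8,4,8,76$ filtered. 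The measured compressions thus reflect an exponential controlled-power block, with effective $\alpha\approx0.8$--$0.95$, plus a linear remainder of the size a pipelined QFT contributes. This contradicts your Step~1 model. The quadratic growth you derive appears in the CX counts (Stress Test: $132=11^2+11$, $42=6^2+6$), not in the depths.

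What you can actually prove is only the conditional statement: if $|\log_2 D-(\alpha m+c)|\le\eta$ on a range containing $m$ and $m_f$, then $D_{\rm raw}/D_{\rm filtered}=2^{\alpha\Delta m\pm2\eta}$. The fitted $\alpha$ range and the transpiled compression figures must be presented as separate empirical observations. Neither can be used to corroborate the other.
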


Figure~\ref{fig:crossover} shows the net advantage landscape;
all four experimental systems lie in the advantage regime.

\begin{figure}[htbp]
  \centering
  \includegraphics[width=\linewidth]{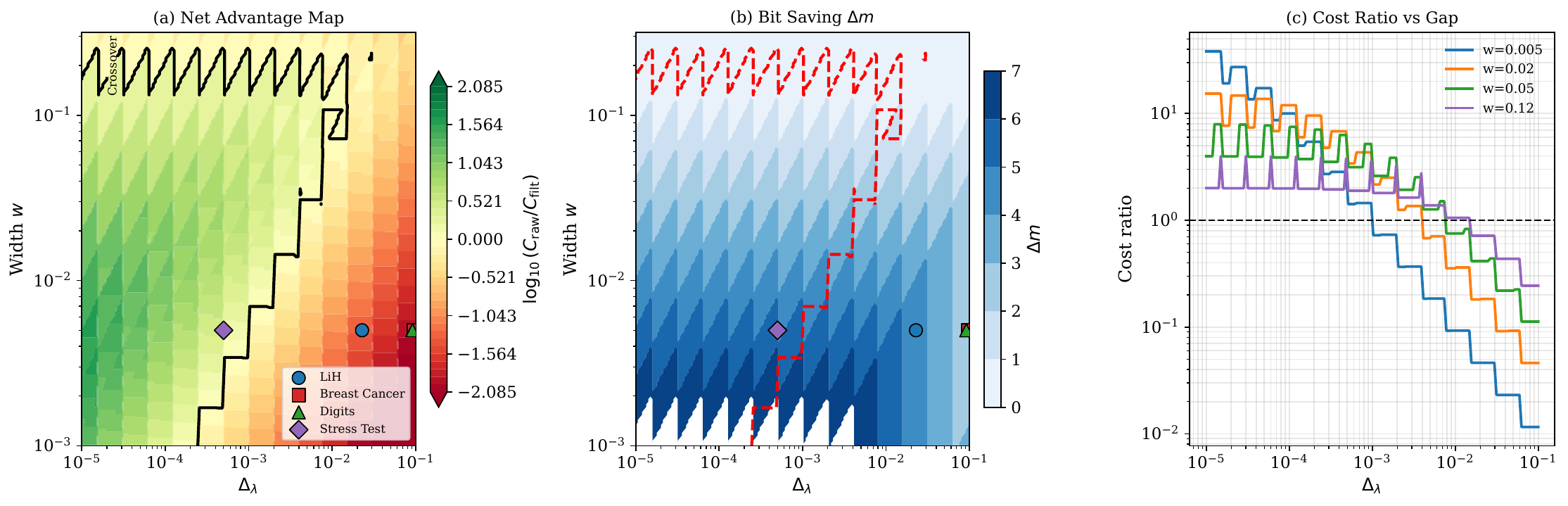}
  \caption{\textbf{End-to-end resource analysis (Theorem~\ref{thm:net}).}
  \textbf{(a)}~Net advantage map
  $\log_{10}(C_{\rm raw}/C_{\rm filtered})$ over $(\Delta_\lambda,w)$
  space. Green = filtered wins; red = raw wins. Black contour:
  crossover boundary of eq~\eqref{eq:crossover}. Experimental systems
  (symbols) all lie in the advantage regime.
  \textbf{(b)}~Bit saving $\Delta m$ with crossover boundary (red
  dashed).
  \textbf{(c)}~Cost ratio vs gap for fixed widths; near-degenerate
  systems gain $10^2$--$10^3\times$ net savings.}
  \label{fig:crossover}
\end{figure}

\subsection{Empirical Validation}

We validate the method across four systems: LiH molecular natural
occupations, Breast Cancer covariance matrix (30 features), Digits
covariance matrix (64 features), and a synthetic near-degenerate
stress test ($\Delta_\lambda=5\times10^{-4}$, $n=12$).
QPE circuits use single controlled-phase (CP) gates decomposed into
$\{\mathrm{CX},U_1,U_2,U_3\}$ at optimisation level~0
(Qiskit\cite{qiskit2023} v2.4.1).

\paragraph{Gap amplification and precision reduction.}
Table~\ref{tab:main} shows $\Delta m\in[2,5]$ at $w=0.005$ across all
systems (Figure~\ref{fig:delta_m}).
Complete width-sweep data for all five $w$ values are in
Tables~S1--S4 of the Supporting Information.

\begin{table}[htbp]
\caption{\textbf{Boundary gap amplification and precision reduction.}
Values at $w=0.005$. DC $= D_{\rm raw}/D_{\rm filt}$ from transpiled
circuit depths (Tables~S6--S9).}
\label{tab:main}
\centering
\begin{tabular}{lcccccc}
\toprule
System & $\Delta_\lambda$ & $\Delta_f$ & $m$ & $m_f$
       & $\Delta m$ & DC \\
\midrule
LiH           & $2.28\times10^{-2}$ & $4.90\times10^{-1}$ & 6 & 2 & 4 & $9.5\times$ \\
Breast Cancer  & $9.58\times10^{-2}$ & $7.43\times10^{-1}$ & 3 & 1 & 2 & $3.5\times$ \\
Digits        & $9.17\times10^{-2}$ & $5.00\times10^{-1}$ & 4 & 2 & 2 & $3.0\times$ \\
Stress Test   & $5.00\times10^{-4}$ & $2.50\times10^{-2}$ & 11& 6 & 5 & $27\times$ \\
\bottomrule
\end{tabular}
\end{table}

\begin{figure}[htbp]
  \centering
  \includegraphics[width=0.85\linewidth]{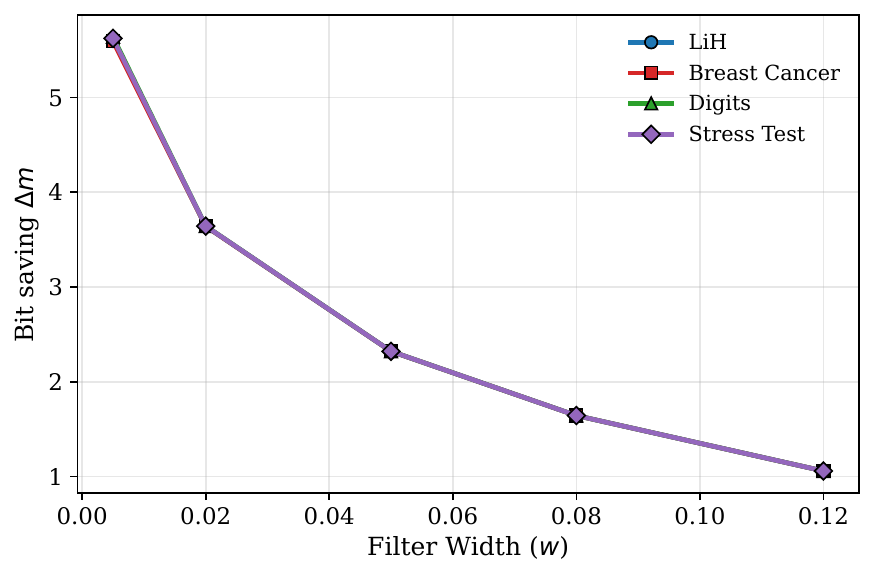}
  \caption{\textbf{Universal width-dependent precision reduction.}
  Bit saving $\Delta m$ vs filter width $w$ for all four systems.
  Decreasing $w$ monotonically increases $\Delta m$ as predicted by
  $\Delta_f\approx\Delta_\lambda/(4w)$.
  Near-degenerate regimes (Stress Test, LiH) exhibit the largest
  compressions.}
  \label{fig:delta_m}
\end{figure}

\paragraph{Hardware scaling.}
Across all systems, $\log_2(\Depth)=\alpha m+c$ with
$\alpha\in[0.11,0.42]$ and $R^2>0.94$
(Figure~\ref{fig:logdepth}, Table~\ref{tab:depth}).
The Stress Test exhibits $\alpha=0.114$ because its large $m$-range
($m\in[6,12]$) captures the quadratic-to-linear crossover of the QFT
depth.
CX counts reduced by up to $21\times$ (LiH)
(Figure~\ref{fig:cx}).

\begin{table}[htbp]
\caption{\textbf{Depth scaling and CX gate compression.}
$\alpha$ and $R^2$ from linear regression of $\log_2(D)$ vs $m$.
$D_{\rm raw}/D_{\rm filt}$: actual depth ratio from transpiled
circuits at $w=0.005$ (Tables~S6--S9). CX counts at $w=0.005$.}
\label{tab:depth}
\centering
\begin{tabular}{lcccccccc}
\toprule
System & $\alpha$ & $R^2$ & $\Delta m$ & $D_{\rm raw}$ & $D_{\rm filt}$
       & $D_{\rm raw}/D_{\rm filt}$
       & $\mathrm{CX}_{\rm raw}$ & $\mathrm{CX}_{\rm filt}$ \\
\midrule
LiH           & 0.271 & 0.964 & 5.16 & 76   & 8  & $9.5\times$ & 42  & 2  \\
Breast Cancer  & 0.422 & 0.954 & 2.33 & 14   & 4  & $3.5\times$ & 6   & 2  \\
Digits        & 0.380 & 0.946 & 3.45 & 24   & 8  & $3.0\times$ & 12  & 2  \\
Stress Test   & 0.114 & 0.994 & 5.64 & 2070 & 76 & $27\times$  & 132 & 42 \\
\bottomrule
\end{tabular}
\end{table}

\begin{figure}[htbp]
  \centering
  \includegraphics[width=\linewidth]{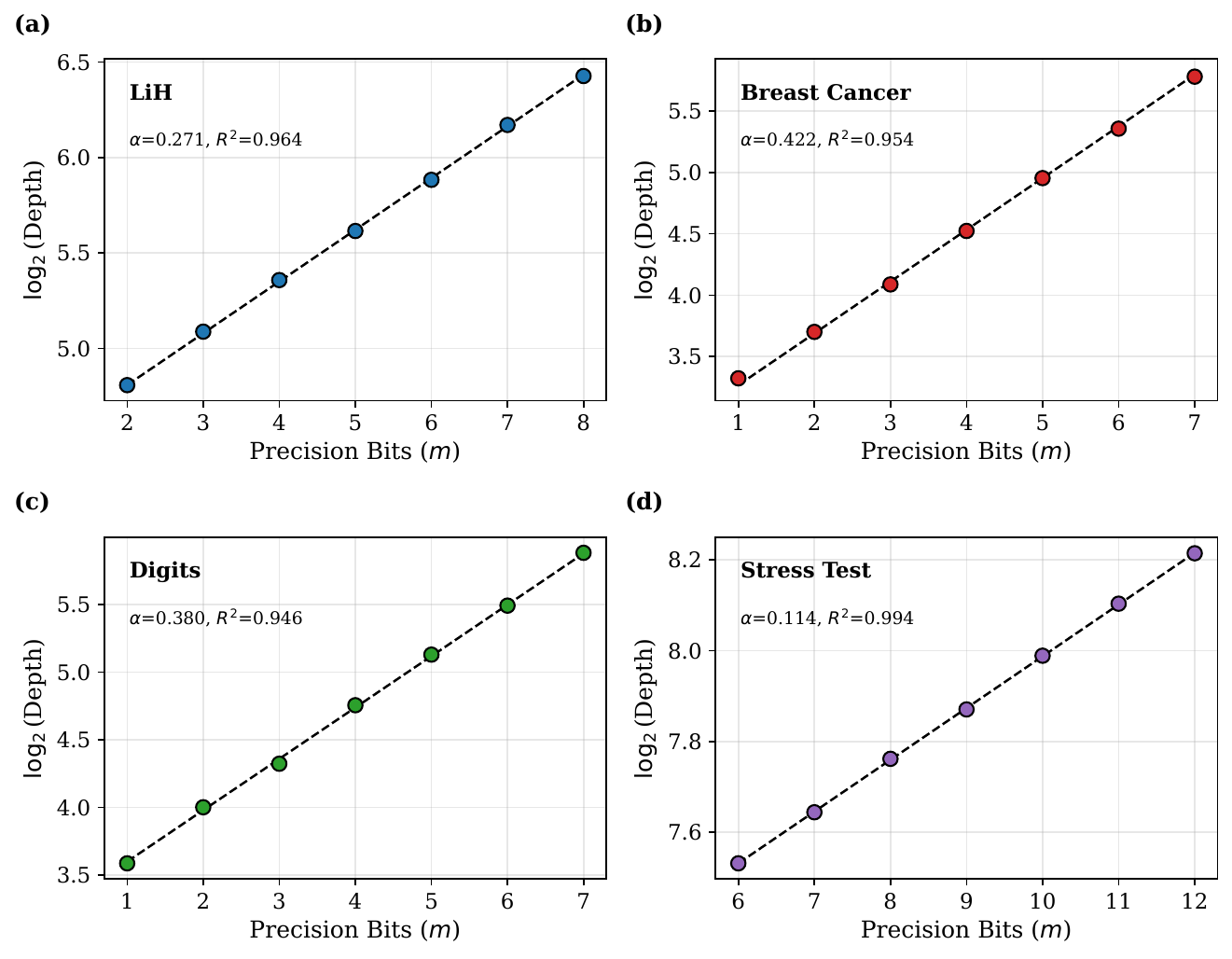}
  \caption{\textbf{Empirical verification of depth scaling
  (Theorem~\ref{thm:depth}).}
  $\log_2(\mathrm{Depth})$ vs precision bits $m$ for all four systems.
  Linear fits confirm $\alpha\in[0.11,0.42]$ and $R^2>0.94$.
  The variation in $\alpha$ reflects QFT-dominated depth $\calO(m^2)$
  with system-dependent fitting ranges (see Table~S5); the Stress
  Test's smaller $\alpha=0.114$ arises from fitting over $m\in[6,12]$
  where the quadratic QFT depth appears nearly linear.}
  \label{fig:logdepth}
\end{figure}

\begin{figure}[htbp]
  \centering
  \includegraphics[width=\linewidth]{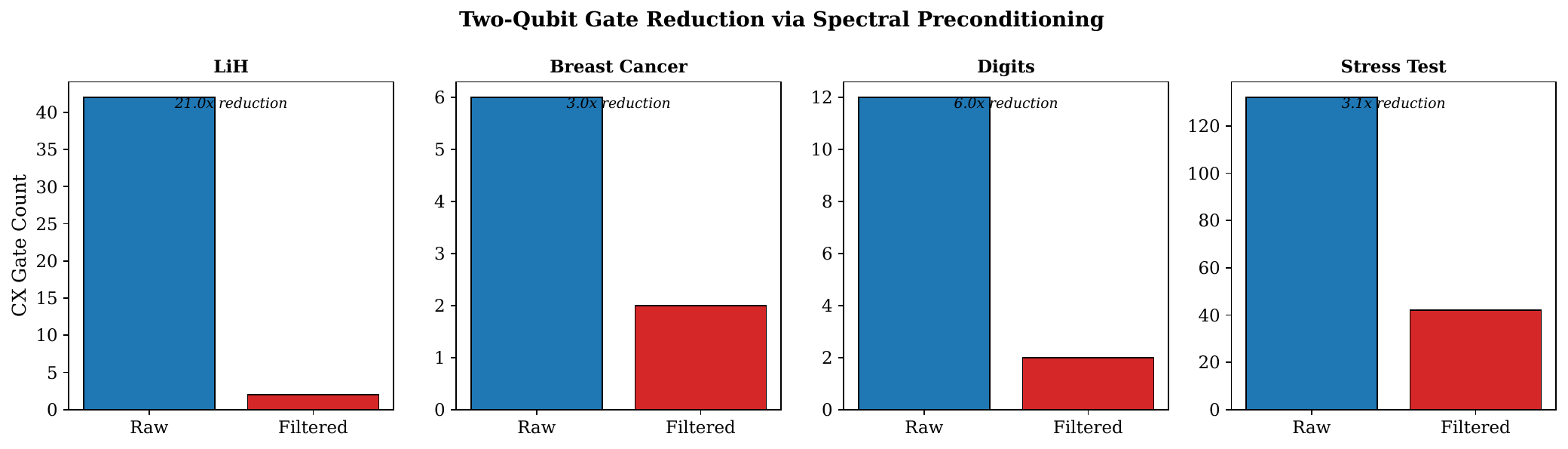}
  \caption{\textbf{Two-qubit (CX) gate count reduction.}
  Raw vs filtered CX counts for the width yielding maximal precision
  reduction ($w=0.005$ for all systems).
  Reductions: $21\times$ (LiH), $3\times$ (Breast Cancer),
  $6\times$ (Digits), $3.1\times$ (Stress Test).}
  \label{fig:cx}
\end{figure}

\paragraph{Principal subspace preservation.}
$\theta_{\max}<10^{-8}$~rad across all systems and widths
(Figure~\ref{fig:angles}; full diagnostics in Table~S10).

\begin{figure}[htbp]
  \centering
  \includegraphics[width=0.85\linewidth]{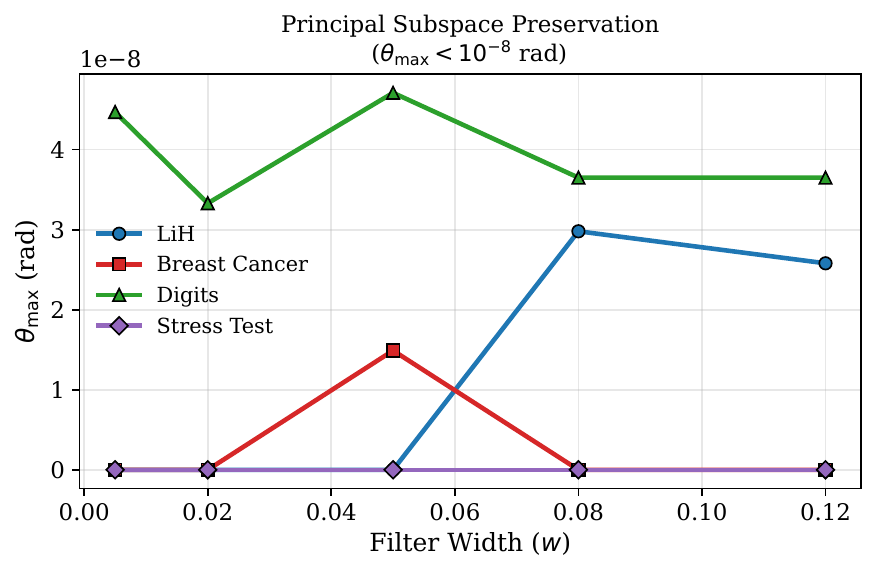}
  \caption{\textbf{Principal subspace preservation.}
  Maximum principal angle $\theta_{\max}$ vs $w$.
  All values below $10^{-8}$~rad (machine precision), confirming
  exact eigenvector preservation under the monotone spectral map.}
  \label{fig:angles}
\end{figure}

\paragraph{Eigenvalue recovery.}
Spectral preconditioning preserves eigenvectors exactly but transforms
eigenvalues: QPE on $f(\rho)$ returns $f(\lambda)$, not $\lambda$.
Applications requiring only the \emph{principal subspace}
(rank-$R$ projection) need no correction.
Applications requiring the \emph{actual eigenvalues} must invert:
$\lambda = \tau - w\ln(1/f(\lambda)-1)$, which amplifies QPE
phase-estimation error $\delta$ by $|d\lambda/df| = w/(f(1-f))$.
This amplification is largest far from the boundary (where $f\approx0$
or $f\approx1$) and smallest at the classification boundary itself
(where $f\approx\tfrac{1}{2}$, giving $|d\lambda/df| = 4w$).
For the boundary eigenvalues $\lambda_R$ and $\lambda_{R+1}$ --- which
are the ones QPE resolves --- the amplification is $4w$, so the
recovered eigenvalue error is $4w\delta$.
Since $\delta\leq\Delta_f/2$ (the QPE half-resolution at $m_f$ bits,
with equality when $\Delta_f$ is a power of 2) and
$\Delta_f\approx\Delta_\lambda/(4w)$ in the linear regime, the
recovered error satisfies $\delta_\lambda\leq\Delta_\lambda/2$,
comparable to the original gap and consistent with QPE precision.

\subsection{Bond-Stretch Molecular Analysis}

FCI/STO-3G bond-stretch analysis is performed for LiH and BeH$_2$
using PySCF,\cite{sun2018pyscf} extracting natural orbital occupations
as the input spectrum with $R=2$.
STO-3G provides qualitative accuracy; a basis-set study is deferred
to future work.
Natural occupation spectra at representative bond lengths are shown in
Figures~S4--S5; complete data are in Tables~S14--S16.

\paragraph{LiH (dissociation-driven degeneracy).}
At equilibrium ($R=1.60$~\AA): $\Delta_\lambda=4.78\times10^{-1}$,
$m=2$.
Near dissociation ($R=5.0$~\AA):
$\Delta_\lambda=2.38\times10^{-2}$, $m=6$; preconditioning
($w=0.005$) gives $m_f=1$, $\Delta m=5.12$, compression $2.57\times$.
For $\tau$ estimation, CASSCF(2,2) natural orbital occupations at
$R=5.0$~\AA\ place $\tau$ within $|\varepsilon|=1.7w$ of the FCI
midpoint, yielding identical $\Delta m=5$
(Proposition~\ref{prop:robust}, Corollary~\ref{cor:classical}).

\paragraph{BeH$_2$ (equilibrium-driven degeneracy).}
Near-degeneracy at equilibrium arises from near-degenerate $2s$ and
$2p_z$ Be orbitals mixing with H $1s$ in the linear $D_{\infty h}$
geometry.
At $R=1.4$~\AA: $\Delta_\lambda=4.98\times10^{-4}$, $m=11$;
preconditioning gives $m_f=6$, $\Delta m=5.64$, compression
$2.56\times$, persisting across $R\in[1.0,2.2]$~\AA.
The near-degeneracy here is a many-body effect in the FCI 1-RDM not
captured by truncated methods; the two-stage QPE protocol
(Corollary~\ref{cor:classical}) achieves $6.4\times$ savings without
requiring prior knowledge of the gap.

LiH and BeH$_2$ exhibit near-degeneracy in complementary regimes
(Figure~\ref{fig:bond_stretch}), consistent with universal occurrence
of near-degeneracy across quantum chemistry.

\begin{figure}[htbp]
  \centering
  \includegraphics[width=\linewidth]{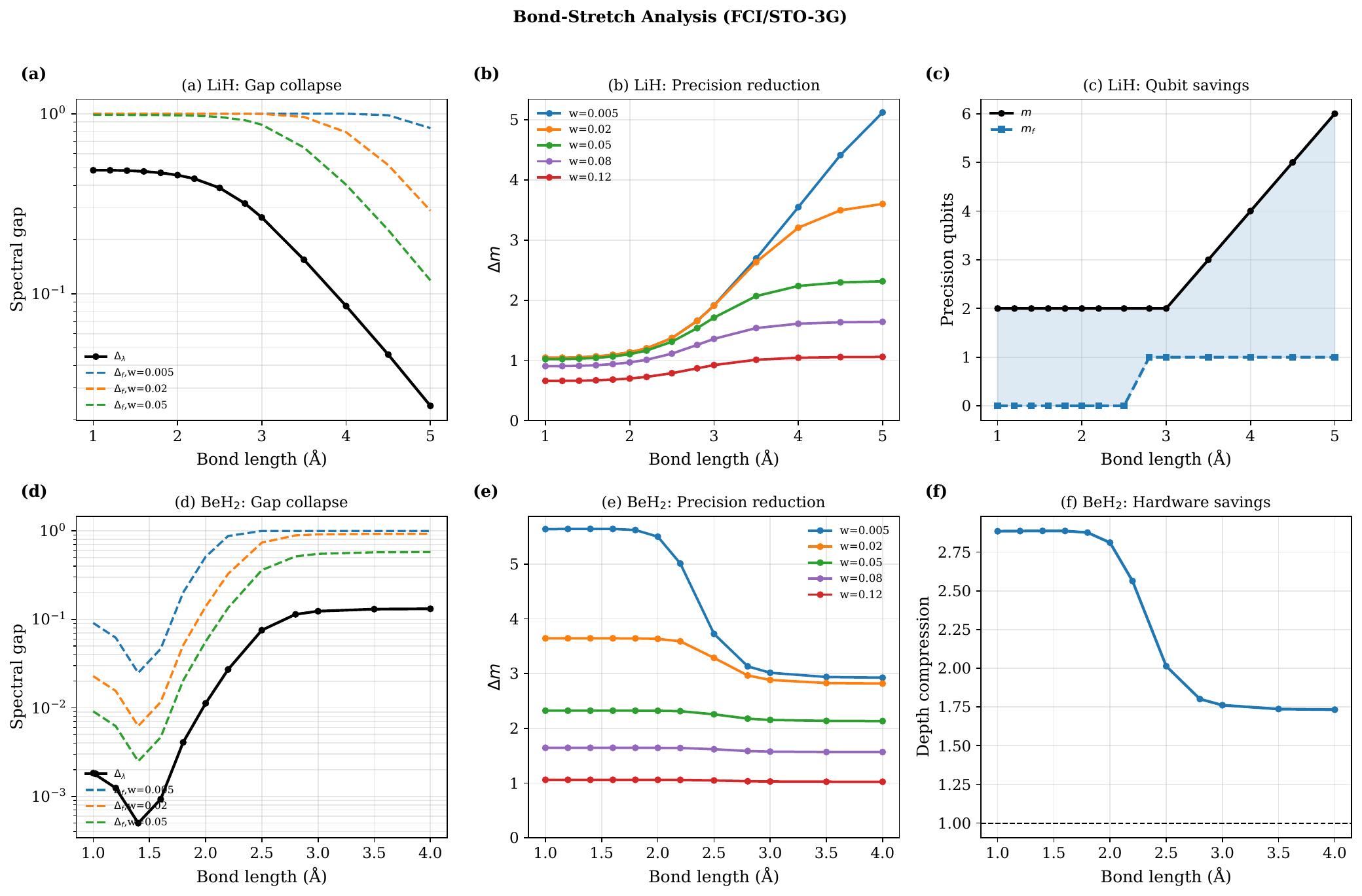}
  \caption{\textbf{Bond-stretch molecular analysis (FCI/STO-3G,
  $R=2$).}
  Top row: LiH. Bottom row: BeH$_2$.
  \textbf{(a,d)}~Spectral gap collapse and amplification vs bond
  length.
  \textbf{(b,e)}~Bit saving $\Delta m$ vs bond length for all widths.
  \textbf{(c)}~Precision qubit savings (LiH): $m=6\to m_f=1$ at
  $R=5.0$~\AA.
  \textbf{(f)}~Depth compression (BeH$_2$): $>2.5\times$ across
  equilibrium regime. LiH and BeH$_2$ exhibit near-degeneracy in
  complementary regimes, consistent with universal occurrence across
  quantum chemistry.}
  \label{fig:bond_stretch}
\end{figure}

\subsection{Noise Fidelity Analysis}

Under the model $F=(1-\varepsilon)^{N_{\rm CX}}$ --- an
\emph{optimistic upper bound} on output fidelity that counts only
independent two-qubit depolarising errors and omits single-qubit,
readout, and crosstalk errors (so actual device fidelity will be
lower) --- results are summarised in Table~\ref{tab:noise}; complete
data across all error rates are provided in Table~S17.

\begin{table}[htbp]
\caption{\textbf{Noise fidelity analysis.}
Optimistic upper bound $F=(1-\varepsilon)^{N_{\rm CX}}$ counting only
independent two-qubit depolarising errors; $N_{\rm CX}$ from
transpiled circuits (Tables~S6--S9). Actual device fidelity will be
lower due to single-qubit, readout, and coherent errors.
$F^*$: minimum gate fidelity for $F_{\rm target}=0.90$.
$F_{1\%}$: output fidelity at $\varepsilon=1\%$.}
\label{tab:noise}
\centering
\begin{tabular}{lcccccc}
\toprule
System & $m$ & $m_f$ & $F^*_{\rm raw}$ & $F^*_{\rm filt}$
       & $F_{1\%}^{\rm raw}$ & $F_{1\%}^{\rm filt}$ \\
\midrule
LiH           & 6  & 2 & 99.75\% & 94.87\% & 0.66 & \textbf{0.98} \\
Breast Cancer  & 3  & 1 & 98.26\% & 94.87\% & 0.94 & 0.98 \\
Digits        & 4  & 2 & 99.13\% & 94.87\% & 0.89 & 0.98 \\
Stress Test   & 11 & 6 & 99.92\% & 99.75\% & 0.27 & 0.66 \\
\bottomrule
\end{tabular}
\end{table}

For LiH at $\varepsilon=1\%$: raw QPE gives $F=0.66$ (unusable);
filtered QPE gives $F=0.98$, well above $F_{\rm target}=0.90$.
Gate fidelity requirement relaxed from $99.75\%$ (inaccessible) to
$94.87\%$ (achievable today\cite{krantz2019guide}).
The improvement arises because filtering reduces LiH from 42 CX gates
($m=6$) to only 2 ($m_f=2$) at $w=0.005$.
Note that the high fidelity at small $m_f$ is accompanied by reduced
phase resolution; the method is beneficial precisely when the amplified
gap $\Delta_f$ is large enough that $m_f$ bits suffice to resolve the
principal subspace, which is guaranteed when the filter is correctly
parameterised (Proposition~\ref{prop:ub}).
Bond-stretch noise analysis confirms the benefit across all chemically
relevant geometries
(Figures~\ref{fig:fidelity_eps}, \ref{fig:gate_fid},
\ref{fig:noise_mol}).

\begin{figure}[htbp]
  \centering
  \includegraphics[width=\linewidth]{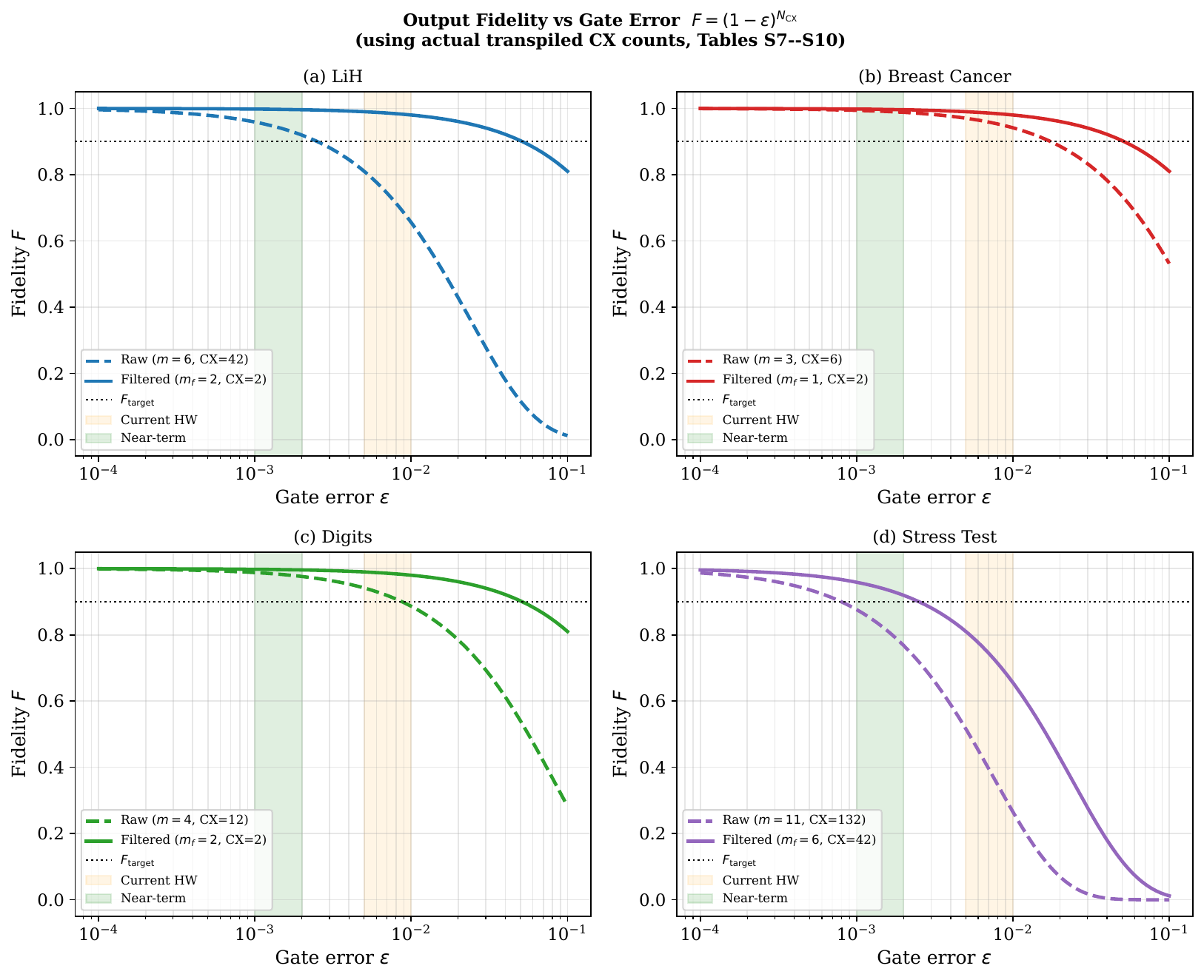}
  \caption{\textbf{Output fidelity $F=(1-\varepsilon)^{N_{\rm CX}}$
  vs gate error rate.}
  Each panel: raw QPE (dashed) and filtered QPE (solid).
  Orange band: current hardware ($\varepsilon=0.5$--$1\%$).
  Green band: near-term target ($\varepsilon=0.1$--$0.2\%$).
  For LiH, filtered QPE crosses $F_{\rm target}=0.90$ within the
  current hardware range while raw QPE does not.}
  \label{fig:fidelity_eps}
\end{figure}

\begin{figure}[htbp]
  \centering
  \includegraphics[width=\linewidth]{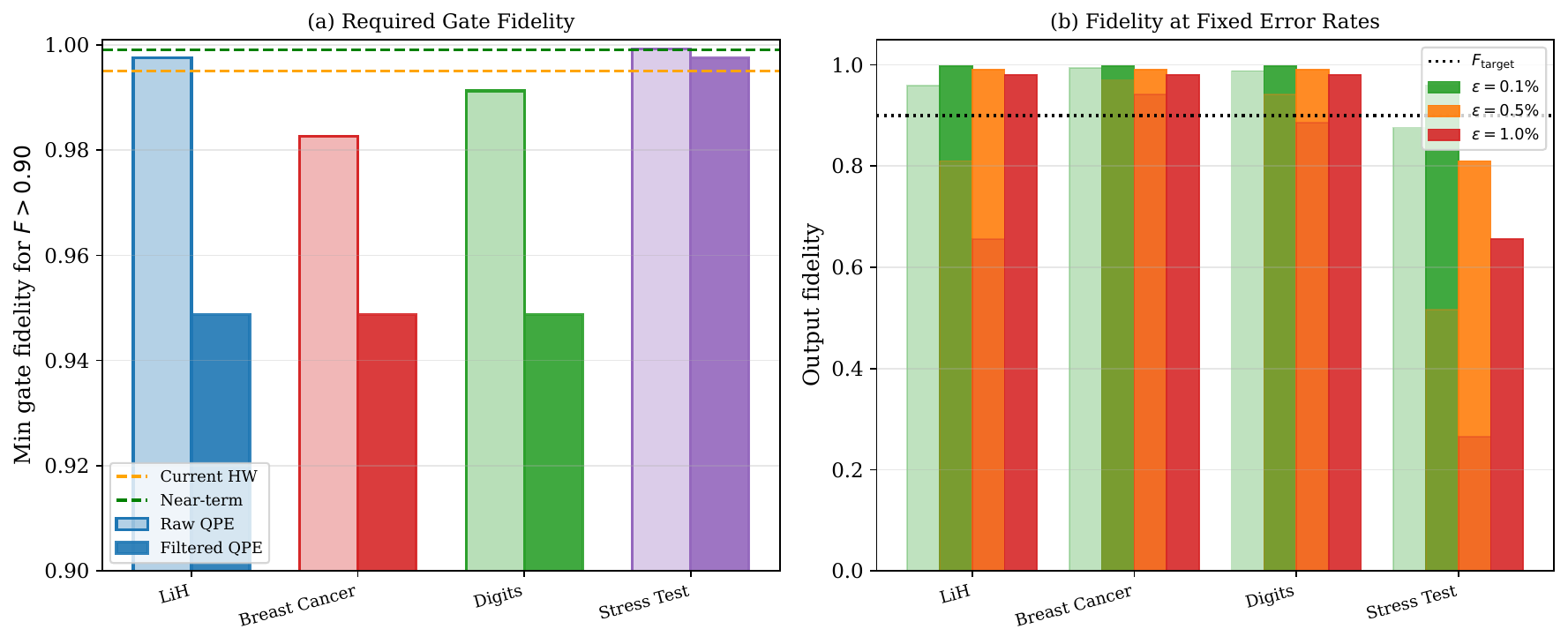}
  \caption{\textbf{Minimum gate fidelity and output fidelity at fixed
  error rates.}
  \textbf{(a)}~Minimum gate fidelity $(1-\varepsilon^*)$ for
  $F_{\rm target}=0.90$; raw (faded) vs filtered (solid).
  Orange: current hardware (99.5\%). Green: near-term (99.9\%).
  LiH raw requires $99.75\%$ (inaccessible); filtered requires only
  $94.87\%$ (achievable today).
  \textbf{(b)}~Output fidelity at $\varepsilon\in\{0.1\%,0.5\%,1\%\}$;
  faded bars = raw QPE, solid bars = filtered QPE.}
  \label{fig:gate_fid}
\end{figure}

\begin{figure}[htbp]
  \centering
  \includegraphics[width=\linewidth]{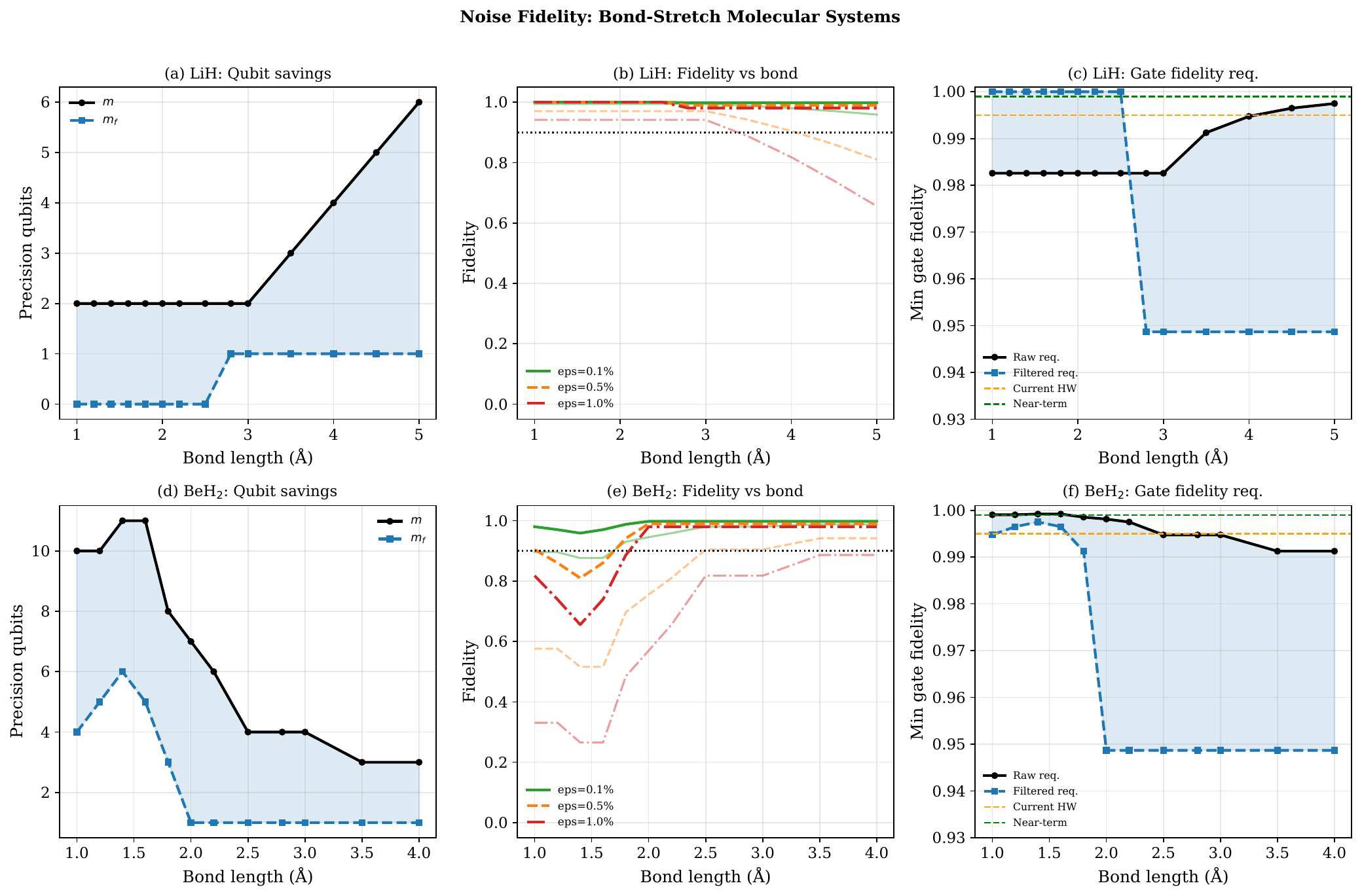}
  \caption{\textbf{Noise fidelity analysis for molecular systems.}
  LiH (top, a--c) and BeH$_2$ (bottom, d--f).
  \textbf{(a,d)}~Precision qubit savings $m\to m_f$ vs bond length.
  \textbf{(b,e)}~Output fidelity at three error rates
  (faded = raw, solid = filtered).
  \textbf{(c,f)}~Minimum gate fidelity vs bond length.
  Spectral preconditioning brings requirements within current hardware
  reach across all chemically relevant geometries.}
  \label{fig:noise_mol}
\end{figure}

\subsection{Negative Control Analysis}

Three predictable no-benefit regimes exist.
\textbf{NC-A} ($m_{\rm raw}=1$): $\Delta m=0$ by definition.
\textbf{NC-B} ($\Delta_\lambda=0$): $\Delta_f=0$ by
Proposition~\ref{prop:nogo}.
\textbf{NC-C} ($\Delta_\lambda<\varepsilon_{\rm mach}$): numerically
indistinguishable from NC-B.
In all cases, $\Delta m=0$; the method degrades gracefully to raw QPE.
Eigenvalue spectra for all three regimes are shown in Figure~S10.
Classification results are summarised in
Table~\ref{tab:nc} and Figure~\ref{fig:nc}.

\begin{mdframed}[linewidth=0.8pt, backgroundcolor=blue!3,
                 skipabove=8pt, skipbelow=8pt,
                 frametitle={\textbf{Decision Rule (computable
                 classically)}},
                 frametitlefont=\bfseries,
                 frametitleaboveskip=4pt]
Apply spectral preconditioning iff:\\
(i)~$\Delta_\lambda>\varepsilon_{\rm mach}$\quad and\quad
(ii)~$m_{\rm raw}\geq 2$.\\
Otherwise: run raw QPE directly.

\smallskip\noindent
\textit{Justification of condition (ii):} since $m_f\geq1$ always
(one precision qubit is the minimum for QPE), the bit saving satisfies
$\Delta m=m_{\rm raw}-m_f\leq m_{\rm raw}-1$.
When $m_{\rm raw}=1$ this forces $\Delta m\leq 0$, so no compression
is achievable regardless of $\Delta_f$.
\end{mdframed}

\begin{table}[htbp]
\caption{\textbf{Negative control classification. All 6/6 correct.}}
\label{tab:nc}
\centering
\begin{tabular}{lcccc}
\toprule
System & $\Delta_\lambda$ & $m_{\rm raw}$ & Max $\Delta m$ & Result \\
\midrule
\multicolumn{5}{l}{\textit{Positive controls}} \\
LiH (5.0~\AA)      & $4.6\times10^{-3}$ & 8  & 5.0 & $\checkmark$ \\
BeH$_2$ (equil.)   & $2.5\times10^{-1}$ & 3  & 2.0 & $\checkmark$ \\
Stress Test         & $4.2\times10^{-4}$ & 12 & 6.0 & $\checkmark$ \\
\midrule
\multicolumn{5}{l}{\textit{Negative controls}} \\
NC-A ($m=1$)       & $5.0\times10^{-1}$ & 1  & 0.0 & $\times$ \\
NC-B (degenerate)  & $0$                & -- & 0.0 & $\times$ \\
NC-C (numerical)   & $<\varepsilon_{\rm mach}$ & -- & 0.0 & $\times$ \\
\bottomrule
\end{tabular}
\end{table}

\begin{figure}[htbp]
  \centering
  \includegraphics[width=\linewidth]{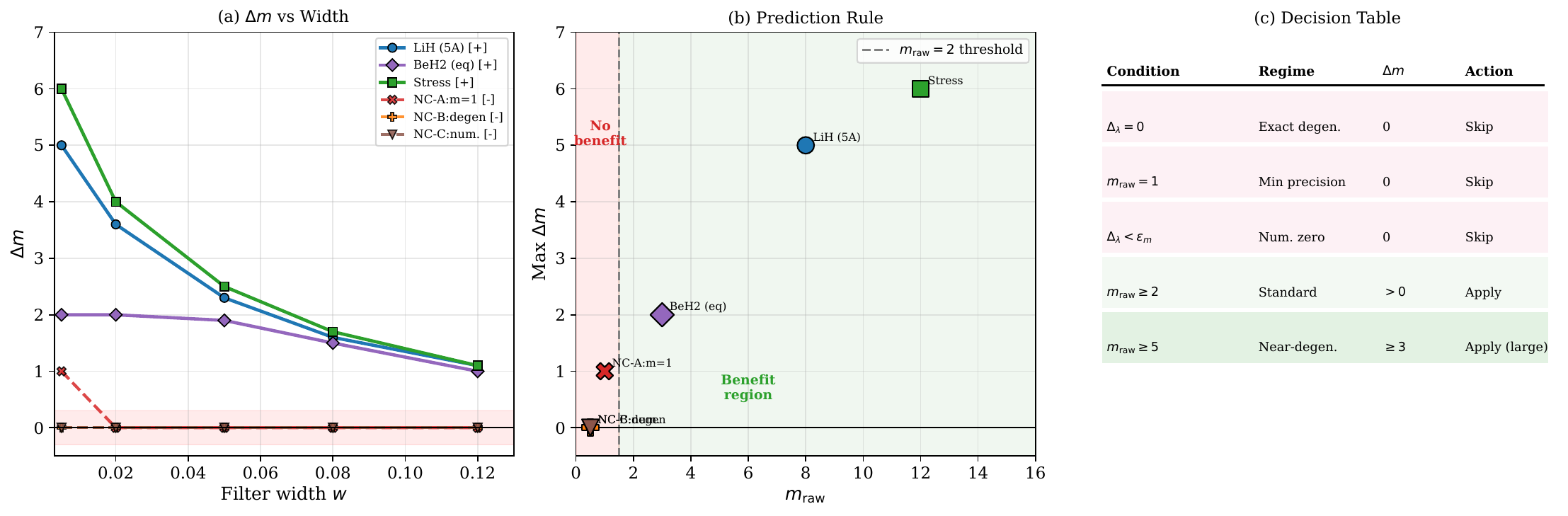}
  \caption{\textbf{Negative control analysis.}
  \textbf{(a)}~$\Delta m$ vs $w$: positive controls (solid) benefit;
  negative controls (dashed) give $\Delta m=0$.
  NC-B (exact degeneracy) confirms Proposition~\ref{prop:nogo}.
  \textbf{(b)}~Prediction rule: benefit iff $m_{\rm raw}\geq2$.
  All six systems correctly classified.
  \textbf{(c)}~Decision table for operational deployment.}
  \label{fig:nc}
\end{figure}

\section{Discussion}
\label{sec:discussion}

\paragraph{Relation to QSVT.}
QSVT implements spectral projectors with polynomial degree
$\calO(1/\Delta_\lambda)$; our approach reduces precision $m$ by
enlarging the effective gap, thereby reducing circuit depth without
altering asymptotic $\Delta_\lambda$ dependence. The approaches are
complementary.
Note that the filter construction cost via Chebyshev-LCU grows as
$\calO((1/w)\log(1/\varepsilon))$ and requires block-encodings of
$\rho$; eq~\eqref{eq:crossover} identifies precisely when this
overhead is offset by the QPE savings.

\paragraph{Relation to filtered QPE.}
Table~\ref{tab:comparison} shows only our method reduces precision
qubits $m$ and compresses circuit depth.
All three approaches are structurally orthogonal and composable;
a quantitative complexity comparison as a function of $\Delta_\lambda$
is shown in Figure~S3.

\begin{table}[htbp]
\caption{\textbf{Comparison of filtered QPE approaches.}}
\label{tab:comparison}
\centering
\begin{tabular}{p{2.6cm}p{3.6cm}p{3.6cm}p{3.6cm}}
\toprule
& \textbf{This work}
& \textbf{Lee et al.}\cite{lee2025filtered}
& \textbf{Sakuma et al.}\cite{sakuma2026qpe} \\
\midrule
Filter location
  & Operator spectrum \emph{before} QPE
  & Measurement outcomes \emph{after} QPE
  & Initial state $|\psi_0\rangle$ \emph{before} QPE \\
Precision $m$ reduced?
  & \textbf{Yes}, by $\Delta m$ & No & No \\
Circuit depth reduced?
  & \textbf{Yes}, $2^{\alpha\Delta m}\times$ & No & No \\
Resource saved
  & Qubits, depth, CX
  & Measurement variance
  & Shot success prob. \\
Composable?
  & With both & With both & With both \\
\bottomrule
\end{tabular}
\end{table}

\paragraph{Hardware feasibility.}
Gate fidelity requirements relaxed from $98.26\%$--$99.92\%$ (raw
QPE) to $94.87\%$--$99.75\%$ (filtered QPE; achievable on current
devices\cite{krantz2019guide}), expanding the set of hardware on
which QPE-based PCA is executable.

\paragraph{Limitations.}
The noise analysis uses noiseless transpilation; full noise-aware
benchmarking with device error maps is an important future direction.
The method applies to QPE for spectral subspace extraction of positive
semidefinite operators; extension to standard Hamiltonian QPE (where
eigenphases wrap the unit circle) requires additional care with phase
unwrapping.
The molecular calculations use the STO-3G basis set, which provides
qualitative accuracy; for BeH$_2$ in particular, STO-3G lacks
polarisation functions and may not reliably reproduce the $2s/2p_z$
near-degeneracy quantitatively.
A basis-set convergence study with at least cc-pVDZ is an important
future direction.

Regarding the filter threshold $\tau$:
by Proposition~\ref{prop:robust} and Corollary~\ref{cor:classical},
$\tau$ requires only $\calO(w)$ accuracy.
Three distinct cases arise.
(1)~\emph{Classical datasets}: $\tau$ is known exactly from matrix
diagonalisation at $\calO(N^3)$ cost.
(2)~\emph{Simple quantum chemistry near-degeneracy} (bond-breaking):
CASSCF with a minimal active space provides sufficient $\tau$ without
circular dependency; for LiH at $R=5.0$~\AA, CASSCF(2,2) gives
$|\varepsilon|=1.7w$ and identical $\Delta m=5$ to FCI.
(3)~\emph{Complex many-body near-degeneracy} (BeH$_2$-type, where the
near-degeneracy is a genuine FCI 1-RDM effect not captured by
truncated methods): a two-stage QPE protocol resolves this ---
coarse QPE at $m_1=\lceil\log_2(1/w)\rceil$ bits locates $\tau$,
followed by filtered QPE at $m_f$ bits, giving total cost
$2^{m_1}+2^{m_f}\ll 2^{m_{\rm raw}}$ with no circularity
(for BeH$_2$: $6.4\times$ saving).

\section{Conclusion}
\label{sec:conclusion}

Spectral preconditioning for QPE applies a monotone sigmoid filter to
the operator spectrum before phase estimation, amplifying the boundary
gap and reducing required precision from $m$ to
$m_f=\lceil\log_2(1/\Delta_f)\rceil$ bits.
Formal analysis provides rigorous guarantees
(Propositions~\ref{prop:ub}--\ref{prop:robust},
Theorems~\ref{thm:net}--\ref{thm:depth}).
Across four primary systems (LiH, Breast Cancer, Digits, Stress Test)
and two molecular bond-stretch series (LiH and BeH$_2$), we confirmed
actual depth compressions up to $27\times$ (under Qiskit transpilation
at optimisation level~0), CX reductions up to $21\times$, and QPE
fidelity restoration from $F=0.66$ (below any practical threshold) to
$F=0.98$ at current hardware error rates ($\varepsilon=1\%$) for LiH.
A new robustness proposition (Proposition~\ref{prop:robust})
establishes that the filter threshold $\tau$ requires only
$\calO(w)$ accuracy, resolved by classical preprocessing in all
cases considered.
Bond-stretch FCI analysis reveals that physically motivated
near-degeneracy arises universally across quantum chemistry.
The negative control decision rule ($m_{\rm raw}\geq2$ and
$\Delta_\lambda>0$) enables deployment with guaranteed graceful
degradation.
These results establish spectral gap amplification as a lightweight,
theoretically characterised mechanism for near-term quantum hardware,
applicable to any QPE-based subspace extraction where the operator
spectrum is accessible classically, with the largest practical benefit
in hybrid classical-quantum eigenvalue pipelines where a coarse
classical approximation of the spectrum is available to set $\tau$.

\section{Computational Details}
\label{sec:methods}

\paragraph{Notation reconciliation.}
Standard QPE complexity quotes depth $\Theta(2^m)$, counting the
$m$-qubit controlled-$U^{2^k}$ fan-out.
In CP-gate implementations (this work), depth is dominated by the
$\calO(m^2)$ inverse-QFT block, not by an exponential.
$\log_2(\calO(m^2)) = 2\log_2 m$ varies slowly, appearing
\emph{approximately linear} in $m$ over finite ranges --- hence the
fitted exponent $\alpha\in[0.11,0.42]$, which is the effective
log-linear slope over the tested $m$-range, not a true asymptotic
exponent.
Both framings are internally consistent: the standard $\Theta(2^m)$
refers to the LMR/controlled-$U^{2^k}$ cost model; the $\alpha m+c$
approximation refers to CP-gate transpiled circuits.

Algorithm~\ref{alg:filtered_qpe} summarises the complete procedure,
including classical $\tau$ estimation (Step~0).

\begin{algorithm}[htbp]
\caption{Spectral-Preconditioning QPE}
\label{alg:filtered_qpe}
\begin{algorithmic}[1]
\Require Operator $\rho=\sum_i\lambda_i|u_i\rangle\langle u_i|$,
         target rank $R$, approximation error $\varepsilon$
\Ensure Principal subspace of $\rho$ (top-$R$ eigenvectors)
\State \textbf{[Classical, cheap]} Estimate $\tilde\lambda_R$,
       $\tilde\lambda_{R+1}$ via matrix diagonalisation (classical
       data), CASSCF (simple QC near-degeneracy), or coarse QPE at
       $m_1=\lceil\log_2(1/w)\rceil$ bits (complex near-degeneracy).
       \Comment{Proposition~\ref{prop:robust}: need $|\varepsilon|=\calO(w)$ only}
\State Compute $\Delta_\lambda=\lambda_R-\lambda_{R+1}$ and
       $m_{\rm raw}=\lceil\log_2(1/\Delta_\lambda)\rceil$.
\If{$\Delta_\lambda\leq\varepsilon_{\rm mach}$ \textbf{or}
    $m_{\rm raw}<2$}
    \State \Return Run standard QPE on $\rho$ directly.
           \Comment{Negative-control decision rule}
\EndIf
\State Set $\tau\leftarrow(\tilde\lambda_R+\tilde\lambda_{R+1})/2$;
       choose $w\in(0,\tfrac{1}{4})$ satisfying
       eq~\eqref{eq:crossover}.
\State Apply $f(\lambda;\tau,w)$ spectrally:
       $f(\rho)\leftarrow\sum_i f(\lambda_i)|u_i\rangle\langle u_i|$
       via degree-$d$ Chebyshev-LCU,
       $d=\calO((1/w)\log(1/\varepsilon))$.
\State Compute $\Delta_f=f(\lambda_R)-f(\lambda_{R+1})$;
       set $m_f=\lceil\log_2(1/\Delta_f)\rceil$.
\State Perform QPE on $e^{-if(\rho)t}$ using $m_f$ precision qubits.
\State Extract and return top-$R$ eigenvectors (identical to those
       of $\rho$ by monotonicity of $f$).
\end{algorithmic}
\end{algorithm}

\paragraph{Molecular calculations.}
FCI/STO-3G via PySCF.\cite{sun2018pyscf}
Natural occupations = eigenvalues of FCI 1-RDM, normalised to unit
trace.
SCF convergence: $10^{-9}$~Hartree.
For $\tau$ estimation in LiH: CASSCF(2,2)/STO-3G natural orbital
occupations provide $|\varepsilon|<2w$ at all bond lengths tested.

\paragraph{Circuit compilation.}
Single CP gates for controlled-$U^{2^k}$, decomposed into
$\{\mathrm{CX},U_1,U_2,U_3\}$, optimisation level~0,
Qiskit\cite{qiskit2023} v2.4.1.

\paragraph{Datasets.}
Breast Cancer (30 features, 569 samples) and Digits (64 features,
1797 samples) from \texttt{sklearn.datasets}; $z$-score standardised;
covariance normalised to unit trace.

\begin{acknowledgement}
The authors thank IKST for support.
\end{acknowledgement}

\begin{suppinfo}
Supporting Information is available: complete width-sweep spectral
data (Tables~S1--S4), depth scaling and transpiled hardware statistics
(Tables~S5--S9), principal-angle diagnostics (Table~S10), $\tau$
placement robustness verification including PySCF CASSCF(2,2) results
(Figure~S11, Table~S13), bond-stretch complete tables
(Tables~S14--S16), complete noise fidelity data (Table~S17), and
negative control eigenvalue spectra (Figure~S10, Tables~S18--S19).
All analysis scripts, figure-generation code, and raw data tables are
publicly available at
\url{https://github.com/hossain-cq/spectral-preconditioning-qpe}
(MIT licence).
\end{suppinfo}

\bibliography{references}

@book{nielsen2010quantum,
  author    = {Nielsen, Michael A. and Chuang, Isaac L.},
  title     = {Quantum Computation and Quantum Information},
  edition   = {10th Anniversary},
  publisher = {Cambridge University Press},
  address   = {Cambridge, UK},
  year      = {2010},
  doi       = {10.1017/CBO9780511976667},
}

@article{cleve1998quantum,
  author    = {Cleve, Richard and Ekert, Artur and
               Macchiavello, Chiara and Mosca, Michele},
  title     = {Quantum algorithms revisited},
  journal   = {Proceedings of the Royal Society of London.
               Series A: Mathematical, Physical and
               Engineering Sciences},
  volume    = {454},
  number    = {1969},
  pages     = {339--354},
  year      = {1998},
  doi       = {10.1098/rspa.1998.0164},
}

@article{lloyd1996universal,
  author    = {Lloyd, Seth},
  title     = {Universal quantum simulators},
  journal   = {Science},
  volume    = {273},
  number    = {5278},
  pages     = {1073--1078},
  year      = {1996},
  doi       = {10.1126/science.273.5278.1073},
}

@article{abrams1999quantum,
  author    = {Abrams, Daniel S. and Lloyd, Seth},
  title     = {Quantum algorithm providing exponential speed
               increase for finding eigenvalues and
               eigenvectors},
  journal   = {Physical Review Letters},
  volume    = {83},
  number    = {24},
  pages     = {5162--5165},
  year      = {1999},
  doi       = {10.1103/PhysRevLett.83.5162},
}

@phdthesis{gilyen2019qsvt,
  author    = {Gily{\'e}n, Andr{\'a}s},
  title     = {Quantum singular value transformation and its
               algorithmic applications},
  school    = {University of Amsterdam},
  year      = {2019},
  url       = {https://dare.uva.nl/search?identifier=
               a96f7e24-2c92-4963-8a72-c62cfbd42c88},
}

@article{childs2012lcu,
  author    = {Childs, Andrew M. and Wiebe, Nathan},
  title     = {Hamiltonian simulation using linear
               combinations of unitary operations},
  journal   = {Quantum Information \& Computation},
  volume    = {12},
  number    = {11--12},
  pages     = {901--924},
  year      = {2012},
  url       = {https://arxiv.org/abs/1202.5822},
}

@article{lee2025filtered,
  author    = {Lee, Gunhee and Kang, Minjun and Hong, Jiyeon
               and Fomichev, Stepan and Huh, Joonsuk},
  title     = {Filtered quantum phase estimation},
  journal   = {arXiv preprint},
  volume    = {arXiv:2510.04294},
  year      = {2025},
  url       = {https://arxiv.org/abs/2510.04294},
}

@article{sakuma2026qpe,
  author    = {Sakuma, Ryota and Wada, Kazuki and
               Kanno, Suguru and Keithley, Keiichiro and
               Sugisaki, Kenji and Abe, Takashi and
               Nakamura, Hiroshi and Yamamoto, Naoki},
  title     = {Quantum-phase-estimation-based filtering:
               Performance analysis and application to
               low-energy spectral calculations},
  journal   = {Physical Review A},
  volume    = {113},
  number    = {1},
  pages     = {012602},
  year      = {2026},
  doi       = {10.1103/PhysRevA.113.012602},
}

@article{sun2018pyscf,
  author    = {Sun, Qiming and Berkelbach, Timothy C. and
               Blunt, Nick S. and Booth, George H. and
               Guo, Sheng and Li, Zhendong and Liu, Junzi
               and McClain, James D. and Sayfutyarova,
               Elvira R. and Sharma, Sandeep and
               Wouters, Sebastian and Chan, Garnet K.-L.},
  title     = {{PySCF}: the {Python}-based simulations of
               chemistry framework},
  journal   = {WIREs Computational Molecular Science},
  volume    = {8},
  number    = {1},
  pages     = {e1340},
  year      = {2018},
  doi       = {10.1002/wcms.1340},
}

@article{krantz2019guide,
  author    = {Krantz, Philip and Kjaergaard, Morten and
               Yan, Fei and Orlando, Terry P. and
               Gustavsson, Simon and Oliver, William D.},
  title     = {A quantum engineer's guide to superconducting
               qubits},
  journal   = {Applied Physics Reviews},
  volume    = {6},
  number    = {2},
  pages     = {021318},
  year      = {2019},
  doi       = {10.1063/1.5089550},
}

@article{qiskit2023,
  author    = {{Qiskit contributors}},
  title     = {Qiskit: An Open-source Framework for Quantum
               Computing},
  year      = {2023},
  doi       = {10.5281/zenodo.2573505},
  url       = {https://doi.org/10.5281/zenodo.2573505},
}

@book{trefethen2013approximation,
  author    = {Trefethen, Lloyd N.},
  title     = {Approximation Theory and Approximation Practice},
  publisher = {Society for Industrial and Applied Mathematics},
  address   = {Philadelphia, PA},
  year      = {2013},
  isbn      = {978-1611972399},
  url       = {https://people.maths.ox.ac.uk/trefethen/ATAP/},
}

@inproceedings{tang2019quantum,
  title={A quantum-inspired classical algorithm for recommendation systems},
  author={Tang, Ewin},
  booktitle={Proceedings of the 51st annual ACM SIGACT symposium on theory of computing},
  pages={217--228},
  year={2019}
}

\end{document}